\newtheorem{example}[theorem]{Example}                         %_____________
\newtheorem{remark}[theorem]{Remark}
\newtheorem{assumption}[theorem]{Assumption}
\def\fin{\ifmmode{\Large$\diamond$}\else{\unskip\nobreak\hfil
    \penalty50\hskip1em\null\nobreak\hfil{\Large$\diamond$}
    \parfillskip=0pt\finalhyphendemerits=0\endgraf}\fi}
\def\be#1#2\ee{\begin{equation}\label{eq:#1}#2\end{equation}}
\def\req#1{{\rm(\ref{eq:#1})}}
\def\bdm  {\begin{displaymath}}
  \def\edm  {\end{displaymath}}
\def\bdmal{\begin{displaymath}\begin{aligned}}
    \def\edmal{\end{aligned}\end{displaymath}}
\def\thsp{\hspace*{0.1ex}}
\mathchardef\PhiG="0108
\mathchardef\Sigma="0106
\mathchardef\Delta="0101
\newcommand{\dI}{{\cal I}_\square}
\newcommand{\Loewnergeq}{\succcurlyeq}
\newcommand{\Loewnergr}{\succ}
\renewcommand{\L}{{\mathscr L}}
\newcommand{\R}{{\mathord{\mathbb R}}}
\newcommand{\C}{{\mathord{\mathbb C}}}
\newcommand{\E}{{\mathord{\mathbb E}}}
\newcommand{\Normal}{{\cal N}}
\renewcommand{\L}{{\mathscr L}}
\newcommand{\norm}[1]{\|#1\|}
\newcommand{\scalp}[1]{\langle\,#1\,\rangle}
\newcommand{\Real}{{\rm Re\,}}
\newcommand{\rmd}{\,\mathrm{d}}
\newcommand{\ds}{\rmd s}
\newcommand{\dt}{\rmd t}
\newcommand{\dtau}{\rmd \tau}
\newcommand{\dW}{\rmd W}
\newcommand{\dWtilde}{\rmd\Wtilde}
\newcommand{\rmi}{\mathrm{i}}
\newcommand{\eps}{\varepsilon}
\def\req#1{{\rm(\ref{eq:#1})}}
\newcommand{\dupdots}{\mathinner{\mkern1mu\raise\p@
    \vbox{\kern7\p@\hbox{.}}\mkern2mu
    \raise4\p@\hbox{.}\mkern2mu\raise7\p@\hbox{.}\mkern1mu}}
\newenvironment{cmatrix}{\left[\cmatrixc}{\endmatrix\right]}
\newcommand{\phitrev}{\overset{\leftarrow}{\varphi}{}}
\newcommand{\Wtilde}{{\widetilde W}}
\newcommand{\phihat}{{\widehat\varphi}}
\newcommand{\psihat}{{\widehat\psi}}
\newcommand{\gammahat}{{\widehat\gamma}}
\newcommand{\rhat}{{\widehat r}}
\renewcommand\@biblabel[1]{#1.}
\title{The second fluctuation-dissipation theorem for the generalized
Langevin equation\thanks{The research 
    leading to this work has been done within
    the Collaborative Research Center TRR~146; corresponding funding 
    by the DFG is gratefully acknowledged.}}
\author{Martin Hanke\thanks{Institut f\"ur Mathematik, Johannes
    Gutenberg-Universit\"at Mainz, 55099 Mainz, Germany
    ({\tt hanke@math.uni-mainz.de})}}
\begin{document}
\sloppy
\maketitle

\vspace*{1ex}

\begin{abstract}
Necessary and sufficient conditions are presented for the existence of 
(second order) stationary solutions of the generalized Langevin equation
under appropriate assumptions on the associated memory kernel.
When this stochastic equation is formulated as an initial value problem, 
then it is shown that the solution approaches a stationary process as time 
goes to infinity, whenever the fluctuating force term is taken to be a 
combination of a white noise process and a mean-square continuous centered 
stationary Gaussian process (which may be correlated with each other). 
The limiting process can be any centered stationary Gaussian process 
with sufficiently smooth spectral density.
On the other hand, the solution itself is only stationary when the fluctuating
force satisfies a certain fluctuation-dissipation relation, 
and this stationary solution is uniquely specified by its covariance.
\end{abstract}

\begin{keywords}
Generalized Langevin equation, stationary stochastic processes
\end{keywords}

\begin{AMS}
{\sc 60G10, 60G15, 60G20}
%, 65R32, 65C20
\end{AMS}

%\hspace*{-0.7em}
%{\footnotesize \textbf{Last modified.} \today}

%\pagestyle{myheadings}
%\thispagestyle{plain}
%\markboth{M. HANKE}
%{FRECHET DIFFERENTIABILITY OF MOLECULAR DISTRIBUTION FUNCTIONS}
%
%\addtocounter{footnote}{2}

%%%%%%%%%%%%%%%%%%%%%%%%%%%%%%%%%%%%%%%%%%%%%%%%%%%%%%%%%%%%%%%%%%%%%%
\section{Introduction}
\label{Sec:Introduction}
%%%%%%%%%%%%%%%%%%%%%%%%%%%%%%%%%%%%%%%%%%%%%%%%%%%%%%%%%%%%%%%%%%%%%%
In statistical physics the stochastic Ornstein-Uhlenbeck differential equation
\be{OU}
   \dot{V}(t) \,=\, -DV(t) \,+\, G\dot{W}(t)\,, \qquad t>0\,,
\ee
where $D$ and $G$ are given real $d\times d$ matrices with $-D$ being stable,
%(i.e., its eigenvalues have negative real part), 
and $\dot{W}$ is a $d$-dimensional white-noise process,
is known as \emph{Langevin equation}.
It is used as a phenomenological model for the
motion of a Brownian particle with mass $m>0$,
immersed in a fluid of microparticles: in this case $d=3$, say, is the space
dimension and $V\in\R^3$ the particle's velocity; 
$-DV$ is the friction term with
$D=\gamma I$, where $\gamma>0$ is the friction coefficient 
and $I\in\R^{3\times 3}$ the identity matrix, 
and $F=G\dot{W}$ is the fluctuating (specific) force. 
%The fact that the latter is a stationary centered process
%reflects the assumption that the system is in thermodynamic equilibrium.

A well-known theorem, cf., e.g., Pavliotis~\cite[Proposition~3.5]{Pavl14}, 
states that the differential equation~\req{OU} has a
stationary centered mean-square continuous solution $V$ 
(also called wide-sense stationary or second order stationary) with covariance
\bdm
   \Sigma \,=\, \E\bigl(V(t)V(t)^*\bigr)\,,
\edm
if and only if $\Sigma$ solves the Lyapunov matrix equation
\be{Lyapunov}
   D\Sigma \,+\, \Sigma D^* \,=\, GG^*\,.
\ee
In the context of the Brownian particle the equipartition theorem
prescribes that the average energy of this particle in equilibrium is given by
$3/(2\beta) = 3k_BT/2$, where $T$ is the temperature of the system and 
$k_B$ is Boltzmann's constant, which means that
\be{equi}
   \Sigma \,=\, \frac{1}{\beta m}\, I\,.
\ee
Accordingly, \req{Lyapunov} yields
\be{fdtLE}
   GG^* \,=\, \frac{2\gamma}{\beta m}\, I\,.
\ee
%because the average energy of the particle is taken to be $1/\beta = k_BT$
%in equilibrium, where $T$ is the temperature of the system and $k_B$ 
%Boltzmann's constant, i.e.,
%\be{equi}
%   \Sigma \,=\, \frac{1}{\beta m}\, I\,.
%\ee
%where $m>0$ is the mass of the particle.

%Following Kubo~\cite{Kubo66}, 
The relation~\req{fdtLE} is known as 
\emph{fluctuation-dissipation theorem}; 
quoting Pottier~\cite[p.~240]{Pott10}, 
this name 
%\red{-- which is due to Kubo~\cite{Kubo66} --}
expresses the fact that the friction force and the fluctuating force 
represent two aspects of the same physical phenomenon, namely the collisions
of the Brownian particle with the molecules of the fluid which surround it.
Fluctuation-dissipation theorems like this one form the basis
of linear response theory in statistical physics.

The fluctuation-dissipation theorem was investigated in detail by 
Kubo~\cite{Kubo66}. Following his terminology, the connection~\req{fdtLE}
is nowadays often referred to as the 
\emph{second} fluctuation-dissipation theorem.
In fact, the focus in \cite{Kubo66} is on the 
\emph{generalized Langevin equation}
\be{GLE-Kubo}
   \dot{V}(t) \,=\, - \int_0^{t} \gamma(t-s) V(t)\ds \,+\, F(t)\,, \qquad
   t>0\,,
\ee
where the memory kernel $\gamma$ may be scalar or matrix-valued,
and $F$ is a fluctuating force, again. This model
is considered to be more appropriate for the dynamics of the above particle
when its mass is similar to the mass of the fluid molecules, so that
the relevant time scales associated with the motion of the different types
of particles overlap. Under the assumptions
that the fluctuating force is uncorrelated to the initial velocity 
of the particle at time $t=0$, that the velocity is a 
stationary stochastic process\footnote{All stationary processes
considered in this paper are centered; therefore, here and below,
the notion ``centered'' is omitted for the ease of reading.}, and 
that the fluctuating force is a stationary process
with autocorrelation function
\be{F-Kubo} 
   C_F(t) \,=\, \E\bigl(F(s+t)F(s)^*\bigr) \,=\, \frac{1}{\beta m}\,\gamma(t)\,,
   \qquad
   s,t\geq 0\,,
\ee
i.e., colored noise instead of the white noise process in \req{OU}, 
Kubo shows that the equipartition theorem is satisfied, i.e., 
that the covariance $\Sigma$ of the velocity is given by \req{equi}.
Again, he refers to this as the second fluctuation-dissipation theorem.

Kubo's assumptions are somewhat restrictive, and in particular, the requirement
that force and initial velocity are uncorrelated is sometimes questioned;
cf., e.g., Hohenegger and McKinley~\cite[Section~2.1]{HoMc18}.
Furthermore, important questions remain open: Is it clear that the assumptions 
made by Kubo are not inherently contradictory? If not, are there other 
fluctuating forces which lead to the same stationary solution of 
\req{GLE-Kubo}? And are there fluctuating forces which give rise to
other stationary solutions of \req{GLE-Kubo}, 
that do or do not satisfy the equipartition theorem? 
As a consequence, following Jung and Schmid~\cite{JuSc21}, 
Schilling~\cite{Schi24}, and others, 
it appears to be more appropriate to refer to \req{F-Kubo} as a 
\emph{fluctuation-dissipation relation}, rather than a theorem,
despite the fact that this relation is satisfied in 
relevant bottum-up modeling scenarios;
compare, e.g., the recent survey by Schilling~\cite{Schi22},

%Despite the fact that \req{F-Kubo} is satisfied in 
%relevant bottum-up modeling scenarios,
%compare, e.g., the recent survey by Schilling~\cite{Schi22},
%%e.g., in Mori~\cite{Mori65}, or Zwanzig~\cite[Section~1.6]{Zwan01},
%Kubo's result should not be considered a far-reaching mathematical theorem, 
%for its requirements are restrictive, and the particular assumption that 
%force and initial velocity are uncorrelated is sometimes questioned;
%cf., e.g., Hohenegger and McKinley~\cite[Section~2.1]{HoMc18}.
%Following Jung and Schmid~\cite{JuSc21}, Schilling~\cite{Schi24}, and others, 
%it appears to be more appropriate to refer to \req{F-Kubo} as a 
%\emph{fluctuation-dissipation relation}, rather than a theorem.
%
%In fact, important questions remain open: Is it clear that the assumptions 
%made by Kubo are not inherently contradictory? If not, are there other 
%fluctuating forces which lead to the same stationary solution of 
%\req{GLE-Kubo}? And are there fluctuating forces which give rise to
%other stationary solutions of \req{GLE-Kubo}, 
%that do or do not satisfy the equipartition theorem? 

On the other hand, the interpretation of 
Kubo's second fluctuation-dissipation theorem (or relation) 
in the physics community goes far beyond the original statement
in \cite{Kubo66}, as is documented by representative quotes from
the pertinent literature:
\begin{quote}
\begin{itemize}[leftmargin=-1ex]
\item[-] ``\emph{To generate the correct equilibrium statistics for the CG 
model, the random noise has to obey the second fluctuation dissipation 
theorem}'' (from \cite{LBL16});
\item[-]
``\emph{The fluctuation-dissipation theorem states that equilibration to a 
temperature $T$ requires that the two-time correlation of $F$ and $\gamma$
be related as ...}'' (from \cite{BaBo13});
\item[-]
``\emph{the fluctuation-dissipation theorem dictates that the stationary 
autocorrelation function of the random force, temperature $T$, and
the memory kernel are related ...}'' (from \cite{Goyc09}). 
%\item \cite[p.~1]{ATBKDN21}, 
%``In equilibrium, the random force is related to [the memory kernel] via the
%fluctuation-dissipation theorem''
%\item \cite[p.~4389]{TuBe91},
%``The second fluctuation-dissipation theorem gives a relationship between the
%friction kernel and the covariance function of the random force''
%\item \cite[p.~8331]{WMP20} 
%``... has a memory kernel, which is related to the random force by ...
%to satisfy the second fluctuation-dissipation theorem.''
%\item \cite[p.~176]{ErBu80}
%``The autocorrelation function of the random force is related to the
%memory function by the fluctuation-dissipation theorem''
\end{itemize}
\end{quote}
%These statements raise the question whether the 
%fluctuation-dissipation relation~\req{F-Kubo} is also a necessary property 
%in order to respect equipartition of energy, if it is sufficient at all.

Partial answers to the aforementioned questions 
have recently been provided in \cite[Appendix~A]{JuSc21}.
The purpose of this paper is to formulate a rigorous version of the second
fluctuation-dissipation theorem for the generalized Langevin equation,
which settles sufficiency \emph{and} necessity 
of the fluctuation-dissipation relation~\req{F-Kubo}
for the validity of the equipartition theorem.
The equation to be studied in the sequel is slightly
more general than \req{GLE-Kubo}, as it comes with an 
instantaneous drift term. This adds quite a bit of technical 
burden to the analysis, but it allows for more freedom 
in the choice of phenomenologial models for simulation purposes, 
and the final result improves upon previous work for the very same form
of the generalized Langevin equation.

In the physical context one might expect that for given initial data $V_0$
at time $t=0$ the solution of the generalized Langevin equation~\req{GLE-Kubo} 
will not automatically be a stationary process, but rather that the solution
approaches a stationary one as $t\to+\infty$; compare, e.g.,
Shin, Kim, Talkner, and Lee~\cite{SKTL10}. It is reasonable to assume
-- and will be verified in Theorem~\ref{Thm:GLEinf} below --
that the stationary limit will be a solution of a generalized Langevin equation
similar to \req{GLE-Kubo}, but with infinite time horizon, i.e., where the
integration starts at minus infinity. Another fundamental issue with the
model~\req{GLE-Kubo} is that even when its solution is
stationary \emph{per se}, the \emph{joint process} $(V,F)$ fails to be 
stationary in general, as has been shown in \cite{HoMc18}. This is
irritating, as in equilibrium physics one would expect that the correlations 
between the forces and the velocities in question are also independent of time.
It is therefore interesting to investigate in more detail the 
infinite time horizon generalized Langevin equation 
and its stationary solutions. It turns out
-- as already observed by Mitterwallner et al~\cite{MSDRN20} -- 
that the family of stationary solutions of this variant of the
generalized Langevin equation is much richer than for \req{GLE-Kubo}.

The outline of this paper is as follows. Section~\ref{Sec:Existence} presents
the main results concerning the generalized Langevin equation formulated
as an initial value problem. First, 
the notion and existence of solutions is settled under general assumptions
% of the initial value problem for 
%the generalized Langevin equation~\req{GLE-Kubo} 
on the fluctuating force, and then a corresponding
fluctuation-dissipation theorem is formulated (Theorem~\ref{Thm:fdt}):
It states that when the fluctuating force is a
stationary Gaussian process with a given structure as specified 
in Assumption~\ref{Ass:F} below, then a stationary solution 
of the generalized Langevin equation with a prescribed covariance
matrix exists, if and only if the fluctuating force satisfies an appropriate 
fluctuation-dissipation relation.
For the particular equation~\req{GLE-Kubo} this shows that under very mild 
assumptions on the memory kernel the 
fluctuation-dissipation relation~\req{F-Kubo} is indeed sufficient \emph{and}
necessary to satisfy the equipartition theorem;
so this ultimately justifies the above quotations from the literature.
The general form of the result is discussed in great detail in 
Section~\ref{Sec:discussion}.
Then, Section~\ref{Sec:GLEinf} treats the generalized Langevin equation 
with infinite time horizon and, again, investigates the associated stationary 
solutions. In order to simplify the reading most of the proofs of the 
theoretical results have been postponed to three appendices.

%%%%%%%%%%%%%%%%%%%%%%%%%%%%%%%%%%%%%%%%%%%%%%%%%%%%%%%%%%%%%%%%%%%%%%
\section{Existence of strong solutions of the generalized Langevin equation
and the fluctuation-dissipation theorem}
\label{Sec:Existence}
%%%%%%%%%%%%%%%%%%%%%%%%%%%%%%%%%%%%%%%%%%%%%%%%%%%%%%%%%%%%%%%%%%%%%%
The particular form of the generalized Langevin equation, 
which will be studied in the sequel, is
\be{GLE}
   \dot{V}(t) \,=\, -DV(t) \,-\, \int_0^t \gamma(t-s)V(s)\ds \,+\, F(t)\,,
   \qquad
   V(0) \,=\, V_0\,,
\ee
where the $d$-dimensional process $V$ is sought, the memory kernel $\gamma$ 
takes values in $\R^{d\times d}$, and $F$ is the fluctuating force term.
Compared to \req{GLE-Kubo}, the model~\req{GLE} comes with 
an additional term $-DV$, where $D\in\R^{d\times d}$.
If this term is modeling some instantaneous friction, as in \req{OU}, then
$D$ is symmetric positive definite; but there are also instances, 
in which $V$ represents some general physical quantities and 
\req{GLE} is derived from first principles as, e.g., in \cite{Mori65,Zwan01}, 
where $D$ is a skew-symmetric matrix. And finally,
when \req{GLE} is used as a phenomenological model of some observational data, 
then it may make sense to allow for general matrices $D$ in \req{GLE}; 
see Section~\ref{Sec:discussion} below.

In general, the initial data $V_0$ and the fluctuating force are
$d$-dimensional real-valued stochastic quantities.
In the sequel $V_0$ is a centered Gaussian random variable
and $F$ is a Gaussian process;
see Pavliotis~\cite{Pavl14} and Cram\'{e}r and Leadbetter~\cite{CL67}
as general references concerning Gaussian processes.
In the stochastic differential equation~\req{GLE} 
the symbol $\dot{V}$ is used as a shorthand notation
% in the context of stochastic differential equations, 
with the meaning that a stochastic process
$V$ is a strong solution of \req{GLE}, if and only if $V(t)$
almost surely\footnote{All identities of the individual stochastic
processes considered in this work are satisfied almost surely. 
The corresponding phrase is omitted for the ease of reading.} 
satisfies the integrated identity
\bdm
   V(t) \,=\, V_0 \,-\, D\int_0^t V(t)\dt 
                  \,-\, \int_0^t \int_0^\tau \gamma(\tau-s)V(s)\ds\dtau
                  \,+\, \int_0^t F(\tau)\dtau
\edm
for all $t\in\R^+$. 
In contrast, the prime notation $V'$ is reserved for stochastic processes $V$,
for which the derivative is known to exist in the mean-square sense.
%almost everywhere.
% and every path of the stochastic process $F$ and every
%realization of the initial condition $V(0)=V_0$, given that the latter is
%also a random variable.
Whether $\dot{V}$ can be replaced by $V'$ in \req{GLE} depends on the
regularity of the fluctuating force; see the proof of Theorem~\ref{Thm:GLE}.

%When the model~\req{GLE} is used for numerical simulations in computational
%physics, then $V$ often represents the velocity of a particle or several
%particles with some positive mass $m>0$. Then, assuming the instant friction 
%per mass matrix $D$ to be zero, it is argued that the 
%fluctuation-dissipation theorem requires the process $F$ to be stationary
%(also called wide-sense stationary or second order stationary) and centered,
%and that its autocorrelation function 
%\bdm
%   C_F(t) \,=\,\E\bigl(F(s+t)F(s)^*\bigr)\,, \qquad 
%   s\geq 0\,, \ t\in\R\,, \ s+t\geq 0\,,
%\edm 
%is linked to the memory kernel via
%\be{fdt-phys}
%   C_F(t) \,=\, \frac{k_BT}{m}\,\gamma(t)\,, \qquad t\geq 0\,,
%\ee
%where $k_B$ is the Boltzmann constant and $T$ the temperature.
%\be{eoe}
%   E\bigl(V(t)V(t)^*\bigr) \,=\, \frac{k_BT}{m}\,, \qquad t\geq 0\,.
%\ee

Following Riedle~\cite{Ried03}, who applied the analytical theory of
Volterra integro-differential equations developed in the late 20th century
(compare the monograph of Gripenberg, Londen, and Staffans~\cite{GLS90})
to the generalized Langevin equation with infinite time horizon, 
the deterministic integro-differential equation
\be{DiffResolv}
   r'(t) \,=\, -Dr(t) \,-\, \int_0^t \gamma(t-s)r(s)\ds\,, \quad t\geq 0\,, 
   \qquad 
   r(0)\,=\, I\,,
\ee
where $I$ is the $d\times d$ identity matrix, is of fundamental relevance 
for this work.
When the memory kernel belongs to $L^1_{\rm loc}(\R^+_0)$, i.e., if $\gamma$
is absolutely integrable on every bounded time interval, 
then \req{DiffResolv} has a unique solution $r$ (with values in $\R^{d\times d}$),
cf.~~\cite[Theorem~3.3.1]{GLS90}, the so-called \emph{differential resolvent}.
% according to 
%Gripenberg, Londen, and Staffans~\cite[Theorem~3.3.1]{GLS90} 
%-- called \emph{differential resolvent} in \cite{GLS90}. 
This solution is absolutely continuous, and \req{DiffResolv} holds true 
almost everywhere in $\R^+$. 
In the present context, where the right-hand side of \req{DiffResolv} is 
continuous in $t$, the (classical) derivative of $r$ is a continuous 
function and \req{DiffResolv} is valid for all $t\in\R^+$.
Moreover (see \cite{GLS90} again), there also holds
\be{DiffResolv2}
   r'(t) \,=\, - r(t)D \,-\, \int_0^t r(t-s)\gamma(s)\ds\,, \qquad t\geq 0\,.
\ee

The following theorem, the proof of which is given in \ref{App:A} 
now clarifies the existence of a strong solution of \req{GLE}.

\begin{theorem}
\label{Thm:GLE}
Assume that $\gamma\in L^1_{\rm loc}(\R^+_0)$, and let $r$ be the solution of the 
integro-differential equation~\req{DiffResolv}.
%\be{DiffResolv}
% r'(t) \,=\, -Dr(t) \,-\, \int_0^t \gamma(t-s)r(s)\ds\,, \qquad r(0)\,=\, I\,%,
%\ee
%where $t>0$ and $I$ denotes the $d\times d$ identity matrix. 
Further, let $V_0\in\R^d$ be a centered Gaussian random variable, and
\be{McNg18}
   F \,=\, F_0 \,+\, G\dot{W}
\ee
with a fixed matrix $G\in\R^{d\times d}$, a mean-square bounded and continuous 
$d$-dimensional Gaussian process $F_0$ and
a $d$-dimensional white noise process $\dot{W}$.
Then the generalized Langevin equation~\req{GLE} 
%with initial condition $V_0\in\R^d$ 
admits a unique strong solution $V$, which satisfies
\be{pathwise}
   V(t) \,=\, r(t)V_0 \,+\, \int_0^t r(t-s)F(s)\ds\,, \qquad t>0\,.
\ee
\end{theorem}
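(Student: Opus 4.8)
The plan is to prove existence by verifying directly that the process $V$ given by the variation-of-constants formula~\req{pathwise} is a strong solution, and to prove uniqueness by reducing the difference of two solutions to the homogeneous equation. Throughout I would use the decomposition~\req{McNg18} of the fluctuating force and split the candidate accordingly as $V=V_{\rm reg}+V_W$ with
\bdm
   V_{\rm reg}(t) \,=\, r(t)V_0 \,+\, \int_0^t r(t-s)F_0(s)\ds\,, \qquad
   V_W(t) \,=\, \int_0^t r(t-s)G\,\dW(s)\,.
\edm
Since $r$ is continuous on $\R^+_0$ and $F_0$ is mean-square continuous and bounded, the integral defining $V_{\rm reg}$ exists as a mean-square Riemann integral; and since the deterministic integrand $s\mapsto r(t-s)G$ is continuous, hence square integrable over $[0,t]$, the Wiener integral defining $V_W$ is well defined.

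For the regular part I would invoke the variation-of-constants formula for linear Volterra integro-differential equations~\cite{GLS90}: as $F_0$ is mean-square continuous, $V_{\rm reg}$ is mean-square continuously differentiable and solves~\req{GLE} with forcing $F_0$ and initial value $V_0$ in the classical (mean-square) sense, so that here $\dot V$ may indeed be read as $V'$. This is the content of the earlier remark that replacing $\dot V$ by $V'$ depends on the regularity of the force: it is exactly the white-noise contribution $V_W$ that fails to be mean-square differentiable and thereby forces the integrated formulation.

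The heart of the matter is to show that $V_W$ satisfies the integrated form of~\req{GLE} with forcing $G\dot W$, namely
\bdm
   V_W(t) \,=\, -\,D\int_0^t V_W(\tau)\dtau
   \,-\, \int_0^t\!\!\int_0^\tau \gamma(\tau-s)V_W(s)\ds\dtau
   \,+\, GW(t)\,.
\edm
To this end I would integrate~\req{DiffResolv} from $0$ to $t-s$, use $r(0)=I$, and change variables to obtain, for $0\le s\le t$,
\bdm
   r(t-s) \,=\, I \,-\, D\int_s^t r(\tau-s)\dtau
   \,-\, \int_s^t\!\!\int_s^\tau \gamma(\tau-u)r(u-s)\,\rmd u\dtau\,.
\edm
Inserting this into $V_W(t)=\int_0^t r(t-s)G\,\dW(s)$ and interchanging the Lebesgue integrals in $\tau$ and $u$ with the Wiener integral in $s$ via the stochastic Fubini theorem reproduces exactly the three terms above: the constant $I$ yields $\int_0^t G\,\dW(s)=GW(t)$, while the other two become $-D\int_0^t V_W(\tau)\dtau$ and $-\int_0^t\!\int_0^\tau\gamma(\tau-s)V_W(s)\ds\dtau$ after renaming the dummy variable. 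This interchange is the main technical obstacle; it is justified because on the bounded region $0\le s\le u\le\tau\le t$ the integrands are integrable in the product sense, $r$ being bounded and $\gamma\in L^1_{\rm loc}(\R^+_0)$. Adding the two parts shows that $V=V_{\rm reg}+V_W$ obeys the integrated identity with forcing $F$ and, since $r(0)=I$, initial value $V(0)=V_0$, which establishes existence.

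For uniqueness I would take two strong solutions $V_1,V_2$ and set $U=V_1-V_2$. Subtracting their integrated identities, the force and the initial datum cancel, and $U$ satisfies, pathwise,
\bdm
   U(t) \,=\, -\,D\int_0^t U(\tau)\dtau
   \,-\, \int_0^t\!\!\int_0^\tau \gamma(\tau-s)U(s)\ds\dtau\,.
\edm
The right-hand side is a continuous function of $t$, so $U$ is continuous, hence absolutely continuous, and differentiation gives the homogeneous equation $U'(t)=-DU(t)-\int_0^t\gamma(t-s)U(s)\ds$ with $U(0)=0$. By the uniqueness part of the resolvent theory~\cite[Theorem~3.3.1]{GLS90} this forces $U\equiv0$, whence $V_1=V_2$.
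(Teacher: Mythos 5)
Your proposal is correct and takes essentially the same route as the paper: the identical decomposition into a regular part $V_{\rm reg}=r(t)V_0+\int_0^t r(t-s)F_0(s)\ds$ (which is mean-square differentiable and solves the equation classically) and a Wiener-integral part $V_W$, verification of the integrated identity for $V_W$ via the resolvent equation~\req{DiffResolv} combined with a (stochastic) Fubini interchange, and uniqueness by observing that the difference of two strong solutions solves the homogeneous deterministic Volterra equation, which has only the trivial solution by the uniqueness of the differential resolvent. The only cosmetic deviations are that you integrate \req{DiffResolv} first and then insert it into the Wiener integral, whereas the paper substitutes the differential form of the resolvent equation and integrates afterwards, and that for the regular part you cite the variation-of-constants theory of \cite{GLS90} where the paper verifies the identity by direct differentiation of $V_1$.
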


%%%%%%%%%%%%%%%%%%%%%%%%%%%%%%%%%%%%%%%%%%%%%%%%%%%%%%%%%%%%%%%%%%%%%%%
%\section{The second fluctuation-dissipation theorem}
%\label{Sec:fdt}
%%%%%%%%%%%%%%%%%%%%%%%%%%%%%%%%%%%%%%%%%%%%%%%%%%%%%%%%%%%%%%%%%%%%%%%
Given the existence of a strong solution of the 
generalized Langevin equation~\req{GLE}, the next question is 
whether this solution is a stationary process
%Next This section is concerned with the question, whether the generalized
%Langevin equation~\req{GLE} admits a stationary solution 
with an associated positive definite covariance matrix
\be{Sigma}
   \Sigma \,=\, \E\bigl(V(t)V(t)^*\bigr)\,.
\ee
%independent of $t\geq 0$. 

For this purpose the following assumption concerning the fluctuating force 
will be made throughout.

\begin{assumption}
\label{Ass:F}
The flcutuating force $F$ of \req{GLE} is a stationary 
Gaussian process\footnote{In fact, $F$ is rather a stationary Gaussian
distribution; see Remark~\ref{Rem:distribution} below.} of the form
%\begin{subequations}
%\label{eq:Fconverse}
%\be{Fconverse-a}
\bdm
   F \,=\, F_0 \,+\, G\dot{W}
\edm
with a fixed matrix $G\in\R^{d\times d}$ and
%\be{Fconverse-b}
\be{F0phi}
   F_0(t) \,=\, \int_{-\infty}^\infty \varphi(t-s) \dWtilde(s)\,, \qquad t\in\R\,,
\ee
%\end{subequations}
where $W$ and $\Wtilde$ are two-sided $d$-dimensional Brownian motions 
which are either the same or independent of each other.
%, and which are independent of the initial condition $V_0$ of \req{GLE}.
The density $\varphi$ in \req{F0phi}, which takes values in $\R^{d\times d}$ 
is assumed to satisfy $\varphi\in L^2(\R)$.
\end{assumption}

Take note that the autocorrelation function of $F_0$ of \req{F0phi} is given by
\be{CF0}
   C_{F_0}(t) \,=\, \E\bigl(F_0(s+t)F_0(s)^*\bigr)
   \,=\, \bigl(\varphi * \phitrev^*\bigr)(t)\,, \qquad
   s,t\in\R\,,
\ee
where $*$ denotes convolution and $\phitrev$ is the
time-reversal of $\varphi$, i.e.,
\bdm
   \phitrev(s) \,=\, \varphi(-s)\,, \qquad s\in\R\,.
\edm
It follows that the Fourier transform
\bdm
   \widehat C_{F_0}(\omega) 
   \,=\, \int_{-\infty}^\infty e^{-\rmi \omega t} C_{F_0}(t)\dt\,,
   \qquad \omega\in\R\,,
\edm
of $C_{F_0}$, \emph{aka} the spectral density of $F_0$, satisfies 
\be{CF0hat}
   \widehat C_{F_0} \,=\, \phihat \phihat^* 
\ee
by virtue of the convolution theorem. Since $\varphi$ is assumed to belong
to $L^2(\R)$, this spectral density lies in $L^1(\R)$. 
By the Riemann-Lebesgue lemma, $C_{F_0}$ is therefore a continuous function, 
and hence, $F_0$ is a mean-square bounded and continuous Gaussian process.

And vice versa: If $F_0$ is a mean-square bounded and continuous stationary
Gaussian process which admits a spectral density $\psihat$,
then $\psihat$ belongs to $L^1(\R)$ and Bochner's theorem 
asserts that $\psihat$ can be factorized as $\psihat=\phihat\phihat^*$ for
some $\phihat\in L^2(\R)$; accordingly, $F_0$ can be written in the 
form~\req{F0phi}.

What follows is a rigorous version of the second fluctuation-dissipation
theorem for the generalized Langevin equation~\req{GLE}.
Its proof, and also the one of the subsequent corollary,
are postponed to \ref{App:B}.

\begin{theorem}
\label{Thm:fdt}
Assume that $\gamma\in L^1(\R^+)$ 
and that the fluctuating force satisfies Assumption~\ref{Ass:F} and
is independent of the initial condition $V_0\sim\Normal(0,\Sigma)$,
where $\Sigma\in\R^{d\times d}$ is symmetric and positive definite.
Then the solution of the generalized Langevin equation~\req{GLE}
is a centered Gaussian process; it is a stationary process, if and only if 
\be{GG}
   GG^* \,=\, D\Sigma \,+\, \Sigma D^*
\ee
and 
\be{fdt-relation}
   \varphi*\phitrev^*
   \,=\, \begin{cases}
            \gamma_\Sigma \,-\, \varphi G^* \,-\, G\phitrev^* \,, &
            \text{if\, $\Wtilde = W$}, \\
            \gamma_\Sigma\,, & 
            \text{if\, $W$\! and $\Wtilde$ are independent}\,,
         \end{cases}
\ee
a.e.\ on $\R$, where
\be{gamma-Sigma}
   \gamma_\Sigma(t) \,=\, 
   \begin{cases}
      \gamma(t)\Sigma\,, & t > 0\,, \\
      \Sigma \gamma(-t)^*\,, & t < 0\,, \\
      (\gamma(0)\Sigma \,+\, \Sigma\gamma(0)^*)/2\,, & t=0\,.
   \end{cases}
\ee
In this case the autocorrelation function of $V$\,is given by 
\be{CV}
   C_V(t) \,=\, r(t)\Sigma \qquad \text{for $t\geq 0$}\,,
\ee
where $r$ is the differential resolvent specified in \req{DiffResolv}.
\end{theorem}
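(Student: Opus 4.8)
The plan is to combine the explicit solution formula~\req{pathwise} with the two resolvent identities~\req{DiffResolv} and~\req{DiffResolv2}. Since $V_0$ is centered Gaussian, $F$ is a centered stationary Gaussian process independent of $V_0$, and \req{pathwise} represents $V(t)$ as a continuous linear functional of $(V_0,F)$ — the white noise entering through the It\^o integral $\int_0^t r(t-s)G\dW(s)$ — the solution $V$ is a mean-square continuous centered Gaussian process. Such a process is stationary if and only if its autocovariance $C(t,u):=\E(V(t)V(u)^*)$ depends on $t-u$ only. Because $V(0)=V_0$ is independent of $F$, formula~\req{pathwise} gives $C(t,0)=r(t)\Sigma$ at once; hence, once stationarity is established, \req{CV} follows, and by conjugate symmetry $C(t,u)=C(u,t)^*$ it suffices to characterize when
\bdm
   C(t,u) \,=\, r(t-u)\Sigma \qquad\text{for all } t\geq u\geq 0 .
\edm

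The technical core is an exact representation of $C(t,u)-r(t-u)\Sigma$, valid \emph{without} any assumption on $D,G,\gamma,\varphi$. Writing the force response $\int_0^t r(t-s)F(s)\ds=B(t)+M(t)$ as the sum of its colored part $B$ and its It\^o part $M(t)=\int_0^t r(t-s)G\dW(s)$, independence of $V_0$ and $F$ yields
\bdmal
   C(t,u) &\,=\, r(t)\Sigma r(u)^* \,+\, \E\bigl(M(t)M(u)^*\bigr) \\
   &\quad{}+\, \E\bigl(B(t)B(u)^*\bigr)\,+\,\E\bigl(B(t)M(u)^*\bigr)\,+\,\E\bigl(M(t)B(u)^*\bigr) .
\edmal
I would evaluate these second moments by the It\^o isometry and by~\req{CF0}; the three mixed $B$/$M$ terms combine into a single double integral with kernel $C_F^{\mathrm{sym}}(s-\sigma)$, the regular part of the force autocorrelation $\E(F(t)F(\sigma)^*)$, which for positive argument is $\varphi*\phitrev^*+\varphi G^*+G\phitrev^*$ when $\Wtilde=W$ and $\varphi*\phitrev^*$ when $W,\Wtilde$ are independent. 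On the other hand, integrating $\tfrac{\rmd}{\ds}\bigl(r(t-s)\Sigma r(u-s)^*\bigr)$ over $s\in[0,u]$ and inserting~\req{DiffResolv2} gives the Lyapunov--resolvent identity
\bdm
   r(t-u)\Sigma - r(t)\Sigma r(u)^*
   \,=\, \int_0^u r(t-s)\bigl(D\Sigma+\Sigma D^*\bigr)r(u-s)^*\ds \,+\, I_1+I_2 ,
\edm
whose memory terms $I_1,I_2$, after the substitutions $\xi=s-\sigma$ and $\xi=\sigma-s$, equal $\iint_{[0,t]\times[0,u]}r(t-s)\gamma_\Sigma(s-\sigma)r(u-\sigma)^*\,\dsigma\ds$ split by the sign of $s-\sigma$ (the two branches being exactly those of~\req{gamma-Sigma}). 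Combining the two displays with $\E(M(t)M(u)^*)=\int_0^u r(t-s)GG^*r(u-s)^*\ds$ produces the master identity
\bdmal
   C(t,u)-r(t-u)\Sigma
   &\,=\, \int_0^u r(t-s)\bigl(GG^*-D\Sigma-\Sigma D^*\bigr)r(u-s)^*\ds \\
   &\quad{}+\, \iint_{[0,t]\times[0,u]} r(t-s)\bigl(C_F^{\mathrm{sym}}-\gamma_\Sigma\bigr)(s-\sigma)\,r(u-\sigma)^*\,\dsigma\ds .
\edmal

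Both directions then follow. For \emph{sufficiency}, if~\req{GG} and~\req{fdt-relation} hold then the first integral vanishes and, by the bookkeeping above, $C_F^{\mathrm{sym}}=\gamma_\Sigma$ a.e., so the second integral vanishes too; hence $C(t,u)=r(t-u)\Sigma$, which is stationarity together with~\req{CV}. For \emph{necessity} I assume the left-hand side vanishes identically. Setting $t=u$ and differentiating in $u$ at $u=0$ kills the second-order double integral and leaves $GG^*-D\Sigma-\Sigma D^*=0$, i.e.~\req{GG}. Independently, testing the integrated form of~\req{GLE} against $V(u)^*$ and differentiating in $t$ yields, for $t>u$, the covariance evolution equation
\bdm
   \partial_t C(t,u) \,=\, -DC(t,u) \,-\, \int_0^t\gamma(t-s)C(s,u)\ds
     \,+\, \int_0^u C_F^{\mathrm{sym}}(t-\sigma)\,r(u-\sigma)^*\,\dsigma ;
\edm
substituting the stationary form $C(\cdot,u)=r(\cdot-u)\Sigma$ (and $C(\cdot,u)=\Sigma r(u-\cdot)^*$ on $[0,u]$) and using~\req{DiffResolv} makes the memory integral over $[u,t]$ telescope, leaving $\int_0^u(C_F^{\mathrm{sym}}-\gamma_\Sigma)(t-\sigma)r(u-\sigma)^*\,\dsigma=0$ for all $t>u\geq0$; letting $u\downarrow0$ and using $r(0)=I$ forces $C_F^{\mathrm{sym}}=\gamma_\Sigma$ a.e.\ on $(0,\infty)$, which is~\req{fdt-relation}.

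The step I expect to be most delicate is the correct accounting of the white-noise contribution when $\Wtilde=W$. Since $F_0$ of~\req{F0phi} is then \emph{anticipating}, the increments of $W$ on an interval $[t_1,t_2]$ are correlated with $V(u)$ for $u<t_1$ through the two-sided stochastic integral, and it is precisely this future correlation that supplies the extra summands $-\varphi G^*-G\phitrev^*$ distinguishing the two cases of~\req{fdt-relation}. Arranging for these cross terms to assemble into the single conjugate-symmetric kernel $C_F^{\mathrm{sym}}$ — equivalently, verifying that the mixed moments $\E(B(t)M(u)^*)$, $\E(M(t)B(u)^*)$ and the anticipating increment term reproduce the two branches of $\gamma_\Sigma$ — is the crux of the computation; the positive definiteness of $\Sigma$ enters only to guarantee that the prescribed matrix is a genuine covariance.
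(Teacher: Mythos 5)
Your proposal is correct and, up to the master identity, coincides with the paper's own proof: the same decomposition of the forced response into a colored part and an It\^o part, the same second-moment bookkeeping producing the kernel $\chi$ (your $C_F^{\mathrm{sym}}$), and the same telescoping of $\frac{\rmd}{\ds}\bigl(r(t-s)\Sigma\thsp r(u-s)^*\bigr)$ via the second resolvent equation~\req{DiffResolv2}, leading exactly to the paper's identity~\req{preliminary}; the sufficiency direction is then identical. Where you genuinely diverge is in extracting the necessity of~\req{fdt-relation}. The paper differentiates~\req{preliminary} in $s$ at $s=0$, which yields \req{GG} and the convolution identity $\int_0^t r(t-\tau)(\chi-\gamma_\Sigma)(\tau)\dtau=0$ in one stroke (see \req{zero} and \req{Delta-help}), and then deconvolves by Laplace transform: $\L r(\zeta)$ has trivial null space by~\req{Lr}, and injectivity of the Laplace transform gives $\chi=\gamma_\Sigma$ a.e.\ on $\R^+$. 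You instead re-derive a forward covariance evolution equation, telescope the memory integral with~\req{DiffResolv}, obtain $\int_0^u(\chi-\gamma_\Sigma)(t-\sigma)\thsp r(u-\sigma)^*\dsigma=0$ for all $t>u\geq 0$, and let $u\downarrow 0$; since $r$ is continuous with $r(0)=I$ and $\chi-\gamma_\Sigma\in L^1_{\rm loc}$, Lebesgue differentiation indeed forces $\chi=\gamma_\Sigma$ a.e.\ on $(0,\infty)$, and conjugate symmetry of both kernels extends this to all of $\R$, as in the paper. Your route is more elementary --- no Laplace machinery, and no need to first argue that $r$ is bounded (the paper secures boundedness from $C_V=r\Sigma$ before transforming) --- but it costs a second pass through the anticipating cross-correlations $\E\bigl(F(t)V(u)^*\bigr)$ when $\Wtilde=W$, which you rightly flag as the delicate point and which should be read in integrated/a.e.-in-$t$ form since the white-noise part makes $t\mapsto C(t,u)$ only almost everywhere differentiable; the paper avoids this second computation by extracting everything from the already-established master identity. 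Your extraction of~\req{GG} (diagonal $t=u$, derivative at $u=0$, with the double integral $o(u)$ because $\norm{\chi-\gamma_\Sigma}_{L^1[-u,u]}\to 0$) is a harmless repackaging of the paper's \req{zero} evaluated at $t=0$.
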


Section~\ref{Sec:discussion} below will shed more light on necessary and
sufficient conditions concerning the solvability of the equations~\req{GG} 
and \req{fdt-relation} occurring in Theorem~\ref{Thm:fdt}; 
see Proposition~\ref{Prop:fdt}, in particular.

Concerning the variant~\req{GLE-Kubo} of the generalized Langevin equation
from the introduction
%Kubo's original setting Theorem~\ref{Thm:fdt} yields 
Theorem~\ref{Thm:fdt} has the following implication.

\begin{corollary}
\label{Cor:fdt-Kubo}
Assume that $\gamma\in L^1(\R^+)$, and 
%$\Sigma\in\R^{d\times d}$ is positive definite. given by \req{equi}.
let $V$ be the solution of the generalized Langevin equation~\req{GLE-Kubo} 
with initial condition $V(0)=V_0\sim\Normal(0,\Sigma)$ 
for a mean-square continuous stationary Gaussian fluctuating force term $F$, 
%which is given by \req{F0phi} and independent 
which is independent of $V_0$. Then $V$ is a stationary process, 
if and only if
\be{Cor:fdt-Kubo}
   C_F(t) \,=\, \gamma(t)\Sigma \qquad \text{for $t\geq 0$}\,.
\ee
%\req{F-Kubo} is satisfied.
In this case the autocorrelation function of $V$ is given by
\bdm
   C_V(t) \,=\, r(t)\Sigma \qquad \text{for $t\geq 0$}\,.
\edm
\end{corollary}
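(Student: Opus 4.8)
The plan is to obtain Corollary~\ref{Cor:fdt-Kubo} as the special case $D=0$, $G=0$ of Theorem~\ref{Thm:fdt}. First I would note that \req{GLE-Kubo} is precisely \req{GLE} with $D=0$, so the existence and characterization results apply verbatim once the hypotheses are matched. The force $F$ is a mean-square continuous stationary Gaussian process; by the \emph{vice versa} part of the factorization discussion following Assumption~\ref{Ass:F}, its spectral density lies in $L^1(\R)$ and, via Bochner's theorem, factorizes, so $F$ can be written in the form \req{F0phi} with $G=0$ and $F_0=F$. Thus $F$ satisfies Assumption~\ref{Ass:F} with $G=0$, its autocorrelation function is $C_F=\varphi*\phitrev^*$ by \req{CF0}, and, stationarity giving a constant (hence bounded) variance, the regularity hypotheses of Theorem~\ref{Thm:GLE} are met as well.

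Next I would substitute $D=0$ and $G=0$ into the two conditions of Theorem~\ref{Thm:fdt}. Condition \req{GG} becomes $0=0$ and so imposes no constraint, holding for every admissible $\Sigma$. In condition \req{fdt-relation} the terms $\varphi G^*$ and $G\phitrev^*$ vanish; since moreover no white noise is present when $G=0$ (only $\Wtilde$ appears in \req{F0phi}), the distinction between the two branches is vacuous and both collapse to the single identity $\varphi*\phitrev^*=\gamma_\Sigma$ a.e.\ on $\R$. Inserting $C_F=\varphi*\phitrev^*$ and $\gamma_\Sigma(t)=\gamma(t)\Sigma$ for $t>0$ from \req{gamma-Sigma}, this reads $C_F(t)=\gamma(t)\Sigma$.

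To pass from the a.e.\ equality on all of $\R$ to the stated relation on $t\ge 0$, I would use the Hermitian symmetry $C_F(-t)=C_F(t)^*$ together with $\gamma_\Sigma(-t)=\Sigma\gamma(t)^*$: the requirement for $t<0$ is then the conjugate transpose of the one for $t>0$, so the two are equivalent and \req{Cor:fdt-Kubo} carries the full content of \req{fdt-relation}. The autocorrelation formula $C_V(t)=r(t)\Sigma$ is then immediate from \req{CV}. The only step demanding care is the hypothesis translation at the outset, namely confirming that a general mean-square continuous stationary Gaussian force genuinely falls under the structured form of Assumption~\ref{Ass:F} with $G=0$ (which requires it to admit a spectral density); this is exactly what the factorization discussion after that assumption supplies, and everything downstream is a routine specialization.
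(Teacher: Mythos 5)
Your reduction works only for forces that genuinely fall under Assumption~\ref{Ass:F} with $G=0$, and that hypothesis translation is where the proposal has a real gap. The corollary assumes only that $F$ is a mean-square continuous stationary Gaussian process, and such a process need not admit a spectral density at all: the \emph{vice versa} discussion after Assumption~\ref{Ass:F}, which you invoke, is explicitly conditional (``\ldots which admits a spectral density\ldots''). The spectral measure guaranteed by Bochner's theorem may have atoms --- take $F(t)=\xi\cos(\omega_0 t)+\eta\sin(\omega_0 t)$ with independent Gaussian $\xi,\eta$, whose spectral measure consists of two point masses --- and then no representation \req{F0phi} with $\varphi\in L^2(\R)$ exists. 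The paper itself flags this in its proof of the corollary: the form \req{F0phi} forces $\widehat C_{F_0}=\phihat\thsp\phihat^*$ to be an integrable density, an extra restriction not contained in the corollary's hypotheses. The damage is asymmetric. The if-direction could be rescued within your approach, since the assumed relation $C_F(t)=\gamma(t)\Sigma$ together with $\gamma\in L^1(\R^+)$ and Hermitian symmetry makes $C_F$ continuous, integrable and of positive type, whence a nonnegative integrable spectral density exists and factorizes, so \req{F0phi} holds in that direction. But the only-if direction is not covered at all: if $V$ is stationary for a force whose spectral measure has a singular part, Theorem~\ref{Thm:fdt} as stated simply does not apply, and your argument cannot conclude \req{Cor:fdt-Kubo}.

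The paper closes this gap not by adding a hypothesis but by re-inspecting the proof of Theorem~\ref{Thm:fdt}: when $D=G=0$ the component $V_2$ and all cross terms vanish, the representation \req{preliminary} of $\E\bigl(V(s+t)V(s)^*\bigr)$ remains valid with $\chi=C_{F_0}$ for \emph{any} mean-square continuous stationary Gaussian force (the structure \req{F0phi} entered only through the $V_2$-correlations and through $C_{F_0}=\varphi*\phitrev^*$), and the Laplace-transform injectivity step in the only-if direction needs merely that $\chi-\gamma_\Sigma$ has a Laplace transform for $\Real\zeta>0$, which follows from boundedness of $C_{F_0}$. If you add this inspection step, the rest of your specialization --- \req{GG} becoming vacuous, the two branches of \req{fdt-relation} collapsing for $G=0$, the Hermitian-symmetry reduction of the a.e.\ identity on $\R$ to $t\geq 0$, and $C_V(t)=r(t)\Sigma$ from \req{CV} --- is correct and matches the paper.
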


Take note that the extension \req{Cor:fdt-Kubo} of 
Kubo's fluctuation-dissipation relation~\req{F-Kubo} to
general covariance matrices $\Sigma$
%, i.e.,
%\be{correctformula}
%   C_F \,=\, \gamma\Sigma \qquad \text{on $\R^+$}\,,
%\ee
is well-known. It can already be found in Mori's original 
work~\cite[(3.12)]{Mori65}.

%%%%%%%%%%%%%%%%%%%%%%%%%%%%%%%%%%%%%%%%%%%%%%%%%%%%%%%%%%%%%%%%%%%%%%
\section{Discussion of the fluctuation-dissipation theorem}
\label{Sec:discussion}
%%%%%%%%%%%%%%%%%%%%%%%%%%%%%%%%%%%%%%%%%%%%%%%%%%%%%%%%%%%%%%%%%%%%%%
To enhance the interpretation of Theorem~\ref{Thm:fdt}
the subsequent couple of comments and examples may be useful.

\begin{remark}
\label{Rem:fdt-Kubo-ext}
\rm
Consider the setting from the introduction, where \req{GLE-Kubo} describes
the dynamics of a macroparticle, and assume that the fluctuating force has been
chosen according to \req{F-Kubo} such that the solution $V$ of 
\req{GLE-Kubo} is a stationary process,
which satisfies the equipartition theorem, i.e., that its covariance matrix
is given by \req{equi}. 
%In fact, any positive multiple of the fluctuating force~\req{F-Kubo}
%leads to a stationary solution, whose covariance matrix is a multiple of the 
%former one that does satisfy the equipartition theorem.
Note that for \req{F-Kubo} it is necessary that the extension
\be{gamma-ext}
   \gamma(-t) \,=\, \gamma(t)^*\,, \qquad t>0\,,
\ee
of the memory kernel to all $t\in\R$ is a function of positive type
(also called positive definite function in the literature), because the
Fourier transform of a stationary process is a positive semidefinite matrix 
for every frequency $\omega\in\R$.

In this case there may be further stationary solutions of \req{GLE-Kubo} 
which violate the equipartition theorem, because their covariance matrix
$\Sigma$ may not be a multiple of the identity matrix; 
see the following example.
%If $\Sigma$ is another positive definite matrix for which $\gamma_\Sigma$ of 
%\req{gamma-Sigma} is also a function of positive type, and if
%$\gammahat_\Sigma$ belongs to $L^1(\R)$, then the equation
%\bdm
%   \phihat \phihat^* \,=\, \gammahat_\Sigma
%\edm
%admits a solution $\phihat\in L^2(\R)$, and from \req{CF0}, \req{CF0hat},
%and Theorem~\ref{Thm:fdt} follows that the corresponding fluctuating force
%$F_0$ of \req{F0phi} will also give rise to a stationary solution of the
%generalized Langevin equation~\req{GLE-Kubo}, 
%this time with covariance matrix $\Sigma$.
\fin
\end{remark}

%For example, let $\gamma:\R\to\R$ be a (scalar) function of positive
%type with $\gamma(0)=1$, and 
%\be{Gamma}
%   \Gamma \,=\, \gamma\,I\,,
%\ee
%where $I$ is the $d\times d$ identity matrix. Taking $\Gamma$ to be the
%memory kernel of \req{GLE-Kubo}, every \emph{diagonal} positive definite
%matrix $\Sigma$ gives rise to a stationary solution of the associated
%generalized Langevin equation, the covariance matrix of which is given by 
%$\Sigma$. This is so, because the corresponding functions $\gamma_\Sigma$
%of \req{gamma-Sigma} are all functions of positive type: 
%$\gammahat_\Sigma(\omega)$ is a diagonal matrix with positive multiples of 
%$\gammahat(\omega)$ on its diagonal, hence $\gammahat_\Sigma(\omega)$ is a
%nonnegative diagonal, i.e., a semidefinite matrix.

\begin{example}
\label{Ex:gamma-Sigma}
\rm
The memory kernel
\bdm
   \gamma(t) \,=\, \begin{cmatrix}
                      e^{-|t|} & 0 \\
                      0 & e^{-2|t|}
                   \end{cmatrix}\,,
\edm
extended as in \req{gamma-ext} to all $t\in\R$, 
is a function of positive type with
\bdm
   \gammahat(\omega) \,=\, 
   \begin{cmatrix}
      2(1+\omega^2)^{-1} & 0 \\[1ex] 0 & 4(4+\omega^2)^{-1}
   \end{cmatrix}\,, \qquad \omega\in\R\,,
\edm
and since $\gammahat\in L^1(\R)$, there exists a fluctuating force $F=F_0$ as
in \req{F0phi}, for which the generalized Langevin equation~\req{GLE-Kubo}
has a stationary solution, which satisfies the equipartition theorem.

For $\eps>0$ and
\bdm
   \Sigma \,=\, \begin{cmatrix} 1 & \eps \\ \eps & 1 \end{cmatrix}
\edm
let $\gamma_\Sigma$ be defined as in \req{gamma-Sigma}. 
Then its Fourier transform is given by
\bdm
   \gammahat_\Sigma(\omega) \,=\, 
   \begin{cmatrix}
      2(1+\omega^2)^{-1} & \eps\,g(\omega) \\[1ex]
      \eps\,\overline{g(\omega)} & 4(4+\omega^2)^{-1}
   \end{cmatrix}\,, \qquad \omega\in\R\,,
\edm
with
\bdm
   g(\omega) \,=\, 
   \frac{1}{1+\rmi\omega} + \frac{1}{2-\rmi\omega}\,.
\edm
Since $|g(\omega)| = O(\omega^{-2})$ as $|\omega|\to\infty$, it follows
that for $\eps$ sufficiently small, $\gammahat_\Sigma$ is diagonally dominant 
for all $\omega\in\R$. This implies that $\gamma_\Sigma$ is a function of
positive type. Furthermore, $\gammahat_\Sigma\in L^1(\R)$, so that the equation
\bdm
   \phihat \phihat^* \,=\, \gammahat_\Sigma
\edm
admits a solution $\phihat\in L^2(\R)$. 
Accordingly, from \req{CF0}, \req{CF0hat}, and Theorem~\ref{Thm:fdt} 
therefore follows that the corresponding 
fluctuating force $F_0$ of \req{F0phi} gives rise to a stationary solution 
of the generalized Langevin equation~\req{GLE-Kubo}, 
but this time with covariance matrix $\Sigma$.
%Accordingly, there exists another fluctuating force $F=F_0$ as in \req{F0phi}, 
%for which \req{GLE-Kubo} 
%has a stationary solution with covariance $\Sigma$.
\fin
\end{example}

Example~\ref{Ex:gamma-Sigma} shows that, in general, by varying the initial
condition and the fluctuating force term,
a given generalized Langevin equation
of the form~\req{GLE-Kubo} admits stationary solutions for several different
covariance matrices, but each of the corresponding processes is uniquely 
defined. 
%Take note that the extension of 
%Kubo's fluctuation-dissipation relation~\req{F-Kubo} to
%general covariance matrices $\Sigma$, i.e.,
%\be{correctformula}
%   C_F \,=\, \gamma\Sigma \qquad \text{on $\R^+$}\,,
%\ee
%is well-known. It can already be found in Mori's original 
%work~\cite[(3.12)]{Mori65}. 
%Beware, however, that a mistaken version of \req{correctformula}
%occasionally appears in the literature, 
%e.g. in \cite[(5)]{SKTL10}.

For $D=0$ it follows from the assertion~\req{CV} of Theorem~\ref{Thm:fdt} 
that the autocorrelation function $C_V$ of any stationary solution of \req{GLE} 
satisfies
\be{CV-prime}
   C_V'(0+) \,=\, r'(0)\Sigma \,=\, 0
\ee
by virtue of \req{DiffResolv}; here, $C_V'(0+)$ refers to the right-hand
side derivative of $C_V$ at $t=0$. 
In practice, on the other hand, when the goal is to set up a 
phenomenological model for a stochastic process of interest,
the available observational data may not be consistent with the 
property~\req{CV-prime}. If the data rather indicate that
\bdm
   C_V'(0+) \,=\, -D\Sigma
\edm
for some nonzero $D\in\R^{d\times d}$, then this suggests 
(cf., e.g., \cite{JuSc21}) to use a
generalized Langevin equation of the form \req{GLE} with a corresponding
instantaneous drift term as a potential model for $V$. Take note that 
\bdm
   C_V(-t) \,=\, C_V(t)^*\,,
\edm
and hence, the left-hand side derivative at $t=0$ equals
\bdm
   C_V'(0-) \,=\, \Sigma D^*
\edm
in this case, meaning that the graph of $C_V$ has a kink at $t=0$, 
unless $D\Sigma$ is skew-symmetric.

The generalized Langevin equation~\req{GLE} with general instantaneous drift
coefficient $D$ has been used as a phenomenological model, for example,
by Ceriotti, Bussi, and Parrinello~\cite{CBP10}
and by Ma, Li, and Liu~\cite{MLL16}. For the same situation
McKinley and Nguyen~\cite{McNg18} have investigated a fluctuating force $F$ 
as in Assumption~\ref{Ass:F} with two independent Brownian motions $W$ and
$\Wtilde$, and they proved the following analog of 
Kubo's result\footnote{The papers \cite{CBP10} and \cite{McNg18} consider the 
generalized Langevin equation with infinite time horizon, 
but this is not essential for their arguments.
The generalized Langevin equation with infinite time horizon 
is the topic of Section~\ref{Sec:GLEinf} below.} (which is
a special case of Theorem~\ref{Thm:fdt}): If $C_{F_0}=\gamma/(\beta m)$ 
and $G$ is such that \req{GG} holds true for $\Sigma$ of \req{equi},
and if the solution $V$ of \req{GLE} is a stationary 
process, then $V$ satisfies the equipartition theorem. 
%This is also part of the statement of Theorem~\ref{Thm:fdt}. 
It must be emphasized, though, that this choice
of the fluctuating force requires that the extension~\req{gamma-ext}
%\bdm
%   \gamma(-t) \,=\, \gamma(t)^*\,, \qquad t>0\,,
%\edm
of the memory kernel to all $t\in\R$ 
be a function of positive type (as pointed out
in Remark~\ref{Rem:fdt-Kubo-ext}).
Theorem~\ref{Thm:fdt} relaxes this requirement by admitting a larger class
of fluctuating force terms, and 
as will be seen in Example~\ref{Ex1} below,
stationary solutions of \req{GLE}, which satisfy the equipartition theorem,
may also exist for certain memory kernels for which the extension to all of 
$\R$ fails to be a function of positive type. 
The reason for this is the following result.
 
\begin{proposition}
\label{Prop:fdt}
Let $\Sigma\in\R^{d\times d}$ be symmetric positive definite, and
let the memory kernel $\gamma$ of \req{GLE} belong to $L^1(\R^+)$.
Assume further that the Fourier transform of $\gamma_\Sigma$ of 
\req{gamma-Sigma} belongs to $L^1(\R)$. 
Then the following two statements are equivalent:
\renewcommand{\theenumi}{(\roman{enumi})}
\begin{itemize}
\item[(i)]
There exist matrices $G\in\R^{d\times d}$ which satisfy \req{GG}, and for
any such $G$ the fluctuation-dissipation relation
\be{Prop:fdt}
   \varphi*\phitrev^* \,+\, \varphi G^* \,+\, G\phitrev^* \,=\, \gamma_\Sigma
   \qquad \text{a.e.\ on $\R$}
\ee
has a 
unique solution $\varphi\in L^2(\R)$ with values in $\R^{d\times d}$.
%and for this solution $\varphi G^*$ and $G\varphi$ belong to $L^1(\R)$.
\item[(ii)]
The Fourier transform of $\gamma_\Sigma$ satisfies
\be{SylvesterI}
   \gammahat_\Sigma(\omega) \,+\, D\Sigma \,+\, \Sigma D^* \,\Loewnergeq\, 0
   \qquad \text{for all $\omega\in\R$}\,,
\ee
meaning that the left-hand side of \req{SylvesterI} is a positive semidefinite
matrix for every $\omega\in\R$.
\end{itemize}
\end{proposition}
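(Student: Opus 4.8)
The plan is to move to the frequency domain, where \req{Prop:fdt} becomes a pointwise matrix equation that is solved by completing the square. Since $\gamma\in L^1(\R^+)$ forces $\gamma_\Sigma\in L^1(\R)$, its transform $\gammahat_\Sigma$ is continuous and, by Riemann--Lebesgue, bounded with $\gammahat_\Sigma(\omega)\to0$ as $|\omega|\to\infty$; together with the standing hypothesis $\gammahat_\Sigma\in L^1(\R)$ this gives $\gammahat_\Sigma\in L^1(\R)\cap L^\infty(\R)\subset L^2(\R)$, which I will need for integrability. Each term of \req{Prop:fdt} lies in $L^1+L^2$, so I may transform it, using \req{CF0}--\req{CF0hat} (whence $\widehat{\varphi*\phitrev^*}=\phihat\,\phihat^*$ and $\widehat{\phitrev^*}=\phihat^*$) to obtain the pointwise identity
\[
   \phihat(\omega)\phihat(\omega)^* \,+\, \phihat(\omega)G^* \,+\, G\,\phihat(\omega)^* \,=\, \gammahat_\Sigma(\omega)\qquad\text{a.e.}
\]
Completing the square and inserting \req{GG} recasts this as
\[
   \bigl(\phihat(\omega)+G\bigr)\bigl(\phihat(\omega)+G\bigr)^* \,=\, \gammahat_\Sigma(\omega)+D\Sigma+\Sigma D^* \,=:\, M(\omega).
\]

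This last identity is the heart of the matter. At each fixed $\omega$ a matrix $\phihat(\omega)$ solving it exists \emph{if and only if} $M(\omega)\Loewnergeq0$, since any product of the form $\Psi\Psi^*$ is positive semidefinite while, conversely, every positive semidefinite matrix factors as $\Psi\Psi^*$. This is precisely \req{SylvesterI}. The direction (i)$\Rightarrow$(ii) is then immediate: a solution $\varphi\in L^2$ yields, after transforming, $M(\omega)=(\phihat(\omega)+G)(\phihat(\omega)+G)^*\Loewnergeq0$ a.e., and continuity of $\gammahat_\Sigma$ upgrades this to all $\omega$. The existence of some $G$ with \req{GG} is settled in parallel: letting $|\omega|\to\infty$ in \req{SylvesterI} and using $\gammahat_\Sigma(\omega)\to0$ forces $D\Sigma+\Sigma D^*\Loewnergeq0$, a real symmetric matrix that therefore has the real square root $G=(D\Sigma+\Sigma D^*)^{1/2}$; conversely \req{GG} gives $D\Sigma+\Sigma D^*=GG^*\Loewnergeq0$.

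The substantive direction is (ii)$\Rightarrow$(i), where I must produce a genuine $\varphi\in L^2(\R;\R^{d\times d})$, not merely a pointwise factor. Here I would take the canonical positive semidefinite square root $M(\omega)^{1/2}$, which inherits the symmetry $M(-\omega)=\overline{M(\omega)}$ (so that the resulting $\varphi$ is automatically real), and correct it by the fixed real orthogonal factor $Q$ from the polar decomposition $G=(GG^*)^{1/2}Q$, setting $\phihat(\omega)=M(\omega)^{1/2}Q-G=\bigl(M(\omega)^{1/2}-(GG^*)^{1/2}\bigr)Q$. This choice is forced by the requirement that $\phihat$ decay: the naive factor $M(\omega)^{1/2}$ tends to $(GG^*)^{1/2}\neq G$ at infinity, whereas $M(\omega)^{1/2}Q\to G$. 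Membership $\phihat\in L^2$ then follows by Plancherel once I bound the square-root difference and invoke $\gammahat_\Sigma\in L^2$; the inverse transform $\varphi=\mathcal F^{-1}\phihat$ solves \req{Prop:fdt}.

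Two points carry the real difficulty. The first is the $L^2$-control of $M(\omega)^{1/2}-(GG^*)^{1/2}$ near $\omega=\pm\infty$, together with the correct gauge $Q$: when $GG^*=D\Sigma+\Sigma D^*$ is nonsingular the square root is locally Lipschitz and the difference is $O(\|\gammahat_\Sigma(\omega)\|)$, but when $GG^*$ is singular the matrix square root is only H\"older-$\tfrac12$ continuous there, so the difference is merely $O(\|\gammahat_\Sigma(\omega)\|^{1/2})$ in the directions of $\krn{GG^*}$. This is exactly why the hypothesis is $\gammahat_\Sigma\in L^1$ rather than $L^2$: it makes $\|\gammahat_\Sigma\|^{1/2}$ square-integrable and hence secures $\phihat\in L^2$ in the degenerate case, and handling this case requires running the polar/square-root argument on the range of $G$. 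The second difficulty is the \emph{uniqueness} clause in (i): because $\Psi\Psi^*=M(\omega)$ determines $\Psi$ only up to a right unitary factor $\Psi\mapsto\Psi\,U(\omega)$, the solution is not literally unique, so uniqueness must be read modulo this gauge or enforced through a canonical (outer) spectral factorization of $M$ --- reconciling the stated uniqueness with this gauge freedom is, I expect, the most delicate aspect of the proposition.
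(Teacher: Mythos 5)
Your proposal is correct and follows essentially the same route as the paper: Fourier transform of \req{Prop:fdt}, completing the square to $(\phihat(\omega)+G)(\phihat(\omega)+G)^*=\gammahat_\Sigma(\omega)+D\Sigma+\Sigma D^*$ for (i)$\Rightarrow$(ii), and for (ii)$\Rightarrow$(i) the identical construction $\phihat(\omega)=\bigl(D\Sigma+\Sigma D^*+\gammahat_\Sigma(\omega)\bigr)^{1/2}P^*-G$ via the polar decomposition of $G^*$, with $L^2$-membership secured exactly as you say by a uniform H\"older-$\frac12$ bound on the matrix square root (the paper cites Vainikko's estimate with constant $4/\pi$, valid without any nonsingularity assumption) combined with $\gammahat_\Sigma\in L^1(\R)$ giving $\norm{\gammahat_\Sigma}_2^{1/2}\in L^2(\R)$, and realness of $\varphi$ from $\gammahat_\Sigma(-\omega)=\overline{\gammahat_\Sigma(\omega)}$. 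The uniqueness clause that you rightly flag as delicate is in fact not addressed by the paper's proof either (it only constructs a solution), so your honest reservation does not put you behind the published argument.
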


\begin{proof}
If condition~(i) holds true, then it follows from \req{Prop:fdt}
and the convolution theorem that
\bdm
   \phihat(\omega)\phihat(\omega)^*
         \,+\, \phihat(\omega)G^* \,+\, G\,\phihat(\omega)^* 
   \,=\, \gammahat_\Sigma(\omega) \qquad \text{for a.e.\ $\omega\in\R$}\,.
\edm
This further implies that
\bdm
   (\phihat(\omega)+G)(\phihat(\omega)+G)^*
   \,=\, \gammahat_\Sigma(\omega) \,+\, GG^*
   \,=\, \gammahat_\Sigma(\omega) \,+\, D\Sigma \,+\, \Sigma D^*
\edm
by virtue of \req{GG}, and since the left-hand side of this equation is
positive semidefinite, \req{SylvesterI} follows. Strictly speaking,
the argument only applies almost everywhere,
but since the memory kernel is taken to belong to $L^1(\R^+)$, 
%there holds $\gamma_\Sigma\in L^1(\R)$, and hence, 
the left-hand side of \req{SylvesterI}
is continuous in $\omega$. Therefore \req{SylvesterI} is true for every
$\omega\in\R$.

Consider now the case that condition~(ii) is satisfied.
Since $\gamma_\Sigma\in L^1(\R)$, $\gammahat_\Sigma$ is a continuous function 
and $\gammahat_\Sigma(\omega)$
tends to zero as $|\omega|\to\infty$ by the Riemann-Lebesgue lemma. 
From \req{SylvesterI} therefore follows
that $D\Sigma + \Sigma D^*$ is positive semidefinite, and hence, \req{GG} has
solutions $G\in\R^{d\times d}$. Let $G$ be any solution of this equation,
and $G^*=PG_0$ be the polar decomposition of $G^*$, with a symmetric positive
definite matrix $G_0$ and an orthogonal matrix $P$. Then 
\bdm
   (D\Sigma+\Sigma D^*)^{1/2} \,=\, G_0\,,
\edm
and
%\bdm
%  G \,=\, (D\Sigma+\Sigma D^*)^{1/2}P \qquad \text{with} \qquad P \,=\, UV^*\,.
%\edm
%Since $P$ is an orthogonal (real) matrix, 
an estimate of Vainikko (see~\cite[p.~91]{VaVe86}) asserts 
that the spectral norm of 
%\be{varphihat}
\begin{align*}
   \phihat(\omega) 
   &\,:=\, \bigl(D\Sigma+\Sigma D^*+\gammahat_\Sigma(\omega)\bigr)^{1/2}P^*
           \,-\, G \\
   &\,\phantom{:}=\, \Bigl(
             \bigl(D\Sigma+\Sigma D^*+\gammahat_\Sigma(\omega)\bigr)^{1/2}
                   \,-\,(D\Sigma+\Sigma D^*)^{1/2}
          \Bigr)P^*
\end{align*}
%\ee
satisfies
\be{Vainikko}
   \norm{\phihat(\omega)}_2 
   \,\leq\, \frac{4}{\pi}\, \norm{\gammahat_\Sigma(\omega)}_2^{1/2}\,.
\ee
Moreover, for every $\omega\in\R$ there holds
\begin{align}
\nonumber
   \phihat(\omega)\phihat(\omega)^*
   &\,=\, \bigl(\phihat(\omega) + G\bigr)\bigl(\phihat(\omega)^*+G^*\bigr)
          \,-\, \phihat(\omega)G^* \,-\, G\,\phihat(\omega)^*
          \,-\, (D\Sigma+\Sigma D^*)\\[1ex]
\label{eq:FT:fdt-relation}
   &\,=\, \gammahat_\Sigma(\omega) \,-\, \phihat(\omega)G^*
          \,-\, G\,\phihat(\omega)^*\,.
\end{align}

From \req{Vainikko} follows that $\phihat\in L^2(\R)$, 
hence $\phihat$ is the Fourier transform of some function $\varphi\in L^2(\R)$. 
Furthermore, since $\Sigma$ and the memory kernel $\gamma$ are assumed to be 
real-valued, 
there holds $\gammahat_\Sigma(-\omega)=\overline{\gammahat_\Sigma(\omega)}$, 
and therefore the
same property is true for $\phihat$. Accordingly, the matrix entries of
$\varphi(t)$ are real-valued for every $t\in\R$.
Finally, as the left-hand side of \req{FT:fdt-relation} is the Fourier 
transform of the convolution $\varphi*\phitrev^*$, it follows from an 
inverse Fourier transform of \req{FT:fdt-relation}
that $\varphi$ solves \req{Prop:fdt}.
\end{proof}

\begin{remark}
\label{Rem:L1}
\rm
Proposition~\ref{Prop:fdt} requires on top of the assumptions of 
Theorem~\ref{Thm:fdt} that $\gammahat_\Sigma\in L^1(\R)$. 
On the other hand, if $\Sigma$ is such that $\gamma(0)\Sigma$
is not a symmetric matrix, then $\gamma_\Sigma$ fails to be continuous at
$t=0$, and hence, $\gammahat_\Sigma$ cannot belong to $L^1(\R)$.
Nonetheless, Proposition~\ref{Prop:fdt} remains valid
-- independent of whether $\gamma(0)\Sigma$ is symmetric or not --
when the memory kernel is sufficiently smooth, for example, when
$\gamma$ is absolutely integrable and known to belong to the Sobolev space 
$H^1(\R^+)$ of all square integrable absolutely continuous functions $\gamma$, 
which can be written as
\bdm
   \gamma(t) \,=\, \gamma(0) \,+\, \int_0^t \gamma'(s)\ds \qquad 
\edm
for some $\gamma'\in L^2(\R^+)$ and all $t>0$.
%$\gamma\in L^1(\R^+)\cap H^1(\R^+)$
In fact, it is only the second assertion (ii)$\Rightarrow$(i) 
which is problematic in this case because the extra assumption 
$\gammahat_\Sigma\in L^1(\R)$ is not required for the other statement.
%Note that it follows from the proof of
%Corollary~\ref{Cor:fdt} that $\gamma_\Sigma$ cannot be a function of positive 
%type in this case, and hence, $D\neq 0$ is necessary in \req{GLE} to obtain
%stationary solutions of the generalized Langevin equation. 

In the sequel it is briefly sketched how the argument in the proof of
Proposition~\ref{Prop:fdt} can be modified in this case;
to simplify matters, it will be assumed
that $D\Sigma+\Sigma D^*$ is positive definite and that $G$ of \req{GG}
is its square root.

Let
\bdm
   \psi_\Sigma(t) \,=\, 
   \begin{cases}
      \gamma_\Sigma(t) \,-\, \gamma(0)\Sigma e^{-t}\,, & t>0\,, \\
      0\,, & t=0\,, \\
      \gamma_\Sigma(t) \,-\, \Sigma\gamma(0)^*e^t\,, & t<0\,,
   \end{cases}
\edm
and take note that $\psi_\Sigma\in L^1(\R)\cap H^1(\R)$. Then one computes
\be{Rem:L1-1}
   \gammahat_\Sigma(\omega)
   \,=\, \frac{1}{\rmi\omega}\,\widehat{\psi'_\Sigma}(\omega)
         \,+\, \frac{1}{1+\omega^2}
               \bigl(\gamma(0)\Sigma+\Sigma\gamma(0)^*\bigr)
         \,+\, A(\omega)\,,
\ee
where the Hermitian matrix $A(\omega)$ is given by
\be{Rem:L1-2}
   A(\omega) \,=\, \rmi\bigl(\Sigma\gamma(0)^*-\gamma(0)\Sigma\bigr)
                   \frac{\omega}{1+\omega^2}\,, \qquad
   \omega\in\R\,.
\ee
For $\omega\in\R$ denote by
\bdm
   X(\omega) \,=\, \int_0^\infty e^{-tG}A(\omega)e^{-tG}\dt
\edm
the solution of the Lyapunov equation
\bdm
   GX(\omega) \,+\, X(\omega)G \,=\, A(\omega)\,.
\edm
As functions of $\omega$, $A$ and $X$ belong to $L^2(\R)$.

Introducing
\bdm
   \phihat_0(\omega) 
   \,=\, \bigl(D\Sigma+\Sigma D^*+\gammahat_\Sigma(\omega)\bigr)^{1/2} 
         \,-\, \bigl(G+X(\omega)\bigr)\,,
\edm
it follows that
\bdm
   \norm{\phihat_0(\omega)}_2
   \,\leq\, \frac{4}{\pi}\,
        \norm{\gammahat_\Sigma(\omega) \,-\, A(\omega) \,-\, X(\omega)^2}_2^{1/2}
\edm
as in the proof of Proposition~\ref{Prop:fdt}, and hence
$\phihat_0\in L^2(\R)$, because $\gammahat_\Sigma-A$ belongs
%the first two members on the right-hand side of \req{Rem:L1-1} belong 
to $L^1(\R)$ by virtue of \req{Rem:L1-1}, and so does
$X^2$. Now it is easy to see that $\phihat=\phihat_0+X\in L^2(\R)$ satisfies
\req{FT:fdt-relation}, in which case its inverse Fourier transform $\varphi$
solves the fluctuation-dissipation relation~\req{Prop:fdt}.
\fin
\end{remark}

\begin{remark}
\label{Rem:phisymmetry}
\rm
The proof of the assertion (ii)$\Rightarrow$(i) in Proposition~\ref{Prop:fdt}
can be used for a numerical construction of $\varphi$ of \req{fdt-relation}
and the associated component $F_0$ of the fluctuating force in \req{GLE},
when $\Wtilde=W$ in Assumption~\ref{Ass:F}.
This construction simplifies somewhat when $G$ is chosen to be 
$(D\Sigma+\Sigma D^*)^{1/2}$: then $P=I$ in the polar decomposition of $G^*=G$, 
$\phihat(\omega)$ is self-adjoint for every $\omega\in\R$, and
\be{phisymmetry}
   \varphi(-t) \,=\, \varphi(t)^* \qquad \text{for every $t\in\R$}\,.
\ee
Accordingly, $\phitrev^*=\varphi$ in \req{fdt-relation}.

The identity~\req{phisymmetry} is always true in the scalar case, 
where $\phihat$ is a real-valued scalar function.
\fin
\end{remark}

As has been announced above, the following example demonstrates that there are
memory kernels, for which an instantaneous friction term in \req{GLE} 
is necessary to find a stationary solution of the 
generalized Langevin equation which satisfies the equipartition theorem.

\begin{example}
\label{Ex1}
\rm
The function
\be{Ex1}
   \gamma(t) \,=\, 4\thsp|t|\,e^{-|t|}\,, \qquad t\in\R\,,
\ee
fails to be a function of positive type, because $\gamma(0)$ is not its global
maximum. Accordingly, there cannot be a stationary random process $F$ or $F_0$
satisfying~\req{F-Kubo}. However,
\bdm
   \gammahat(\omega) \,+\, 1
   \,=\, 8\frac{1-\omega^2}{(1+\omega^2)^2} \,+\, 1
   \,=\, \Bigl(\frac{\omega^2-3}{\omega^2+1}\Bigr)^2 \,\geq\, 0
\edm
for every $\omega\in\R$, with $\gammahat\in L^1(\R)$. 
Therefore assumption~\req{SylvesterI} of Proposition~\ref{Prop:fdt} 
is satisfied for $\Sigma=1$ and $D=1/2$.
Choosing $G=1$ in this example, the unique solution $\varphi\in L^2(\R)$ 
of \req{Prop:fdt} is given by
\be{Ex1:phi}
   \varphi(t) \,=\, -2\,e^{-|t|}\,, \qquad t\in\R\,,
\ee
because, cf.~Remark~\ref{Rem:phisymmetry},
\bdm
   (\varphi*\varphi)(t) \,+\, 2\, \varphi(t) \,=\, 
   4\,(1+|t|)e^{-|t|} \,-\, 4\, e^{-|t|} \,=\, \gamma(t)\,.
\edm
Therefore, the generalized Langevin equation~\req{GLE} 
with memory kernel~\req{Ex1} and instantaneous friction coefficient $D=1/2$ 
has a stationary solution $V$ with variance $\sigma^2=1$, 
if the fluctuating force $F$ is defined as in 
Assumption~\ref{Ass:F} with $\Wtilde=W$ and $\varphi$ of \req{Ex1:phi}, 
and if the initial condition $V_0$ is (independently) normally distributed. 
When rescaling $V_0$ and the fluctuating force by the additional factor 
$1/(\beta m)$, then \req{GLE} has a stationary solution which satisfies 
the equipartition theorem.
\fin
\end{example} 

\begin{remark}
\label{Rem:distribution}
\rm
The fluctuating force $F$ of Assumption~\ref{Ass:F} is a stationary
random distribution in the sense of It\^{o}~\cite{Ito53}, compare 
also \cite{KoSi07,Yagl86} for an exposition of the corresponding theory:
For arbitrary test functions $f,g\in C_0^\infty(\R)$
with values in $\R^d$, and for a given path of $W$ and $\Wtilde$ 
the action of $F$ on $f$ is defined as
\begin{align*}
   \scalp{F,f} 
   &\,=\, \int_{-\infty}^\infty f(t)^*F_0(t)\dt 
          \,+\, \int_{-\infty}^\infty f(t)^* G\dW(t)\\[1ex]
   &\,=\, \int_{-\infty}^\infty\int_{-\infty}^\infty f(t)^* \varphi(t-s)\dWtilde(s)\dt
          \,+\, \int_{-\infty}^\infty f(t)^* G\dW(t)\,.          
\end{align*}
Accordingly, the covariance structure of $F$ is determined by
\begin{align*}
   &\E\bigl(\scalp{F,f}\scalp{F,g}\bigr)\\[1ex]
   &\!=\, \int_{-\infty}^\infty \int_{-\infty}^\infty \int_{-\infty}^\infty
             f(t)^* \varphi(t-s) \varphi(\tau-s)^* g(\tau)\ds\dtau\dt\\[1ex]
   &\phantom{\!=\ }
          +\, \kappa\,\Bigl(
                      \int_{-\infty}^\infty\int_{-\infty}^\infty
                         f(t)^*\varphi(t-\tau)G^*g(\tau)\dtau\dt 
                      \,+\,
                      \int_{-\infty}^\infty\int_{-\infty}^\infty
                         f(t)^*G\varphi(\tau-t)^*g(\tau)\dtau\dt\Bigr)\\[1ex] 
   &\phantom{\!=\ }
          +\, \int_{-\infty}^\infty f(t)^*GG^*g(t)\dt\,,
\end{align*}
where $\kappa=0$ if $W$ and $\Wtilde$ are independent, and $\kappa=1$ if 
they are the same. It follows that
\bdm
   \E\bigl(\scalp{F,f}\scalp{F,g}\bigr)
   \,=\, \int_{-\infty}^\infty\int_{-\infty}^\infty 
            f(t)^*C_F(t-\tau)g(\tau)\dtau\dt\,,
\edm
and the associated autocorrelation distribution of $F$ is given by
\bdm
   C_F \,=\, \varphi*\phitrev^* 
             \,+\, \kappa\bigl(\varphi G^* + G\phitrev^*\bigr)
             \,+\, GG^*\,\delta\,,
\edm
with $\delta$ denoting the delta distribution. 

The fluctuation-dissipation theorem~\ref{Thm:fdt} therefore asserts that if
(and only if)
\bdm
   C_F \,=\, \gamma\Sigma \,+\, (D\Sigma+\Sigma D^*)\,\delta\,,
\edm
then the generalized Langevin equation~\req{GLE} has a stationary solution.
This form of the fluctuation-dissipation relation can be found
in \cite{MLL16} for the particular matrix $\Sigma$ of \req{equi}. 
Take note, though, that in \cite{MLL16} the delta distribution is 
weighted with $2D/(\beta m)$, rather than 
$D\Sigma+\Sigma D^*=(D+D^*)/(\beta m)$; 
apparently, the corresponding authors presume their coefficient matrix $D$ 
to be symmetric.
\fin
\end{remark}

%%%%%%%%%%%%%%%%%%%%%%%%%%%%%%%%%%%%%%%%%%%%%%%%%%%%%%%%%%%%%%%%%%%%%%
\section{The generalized Langevin equation with infinite time horizon}
\label{Sec:GLEinf}
%%%%%%%%%%%%%%%%%%%%%%%%%%%%%%%%%%%%%%%%%%%%%%%%%%%%%%%%%%%%%%%%%%%%%%
Consider now the equation
\be{GLEinf}
   \dot{V}_\infty(t) \,=\, -DV_\infty(t) 
      \,-\, \int_{-\infty}^t \gamma(t-s)V_\infty(s)\ds \,+\, F(t)\,,\qquad
   t\in\R\,,
\ee
where the integration starts at minus infinity and the fluctuating force
is given by
\be{McNg18-inf}
   F \,=\, F_0 \,+\, G\dot{W}
\ee
with $G\in\R^{d\times d}$ and $F_0$ a mean-square bounded and continuous
$d$-dimensional Gaussian process; here, $W$ is a two-sided $d$-dimensional
Brownian motion.

\begin{theorem}
\label{Thm:GLEinf}
Let $\gamma\in L^1(\R^+)$ and $F$ be given by \req{McNg18-inf}.
 Assume further that the solution $r$ of the 
integro-differential equation~\req{DiffResolv} belongs to $L^1(\R^+)$. Then
\be{Vinfty}
   V_\infty(t) \,=\, \int_{-\infty}^t r(t-s)F(s)\ds
\ee
is the unique bounded strong solution of \req{GLEinf}. 
Moreover, if $V$ is the corresponding solution of \req{GLE} with the same 
fluctuating force term restricted to $t>0$, 
and any centered Gaussian random variable $V_0\in\R^d$, then
\bdm
   \E\bigl(\norm{V_\infty(t)-V(t)}_2^2\bigr) \,\to\, 0\,, \qquad t\to\infty\,.
\edm
\end{theorem}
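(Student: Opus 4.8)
The plan is to make the resolvent representation \req{Vinfty} the backbone of the whole argument, exactly as \req{pathwise} underlies Theorem~\ref{Thm:GLE}, and to address in turn well-definedness, the strong-solution property, uniqueness, and convergence. First I would settle well-definedness. Since $\gamma\in L^1(\R^+)$ and $r\in L^1(\R^+)$ by hypothesis, the convolution $\gamma*r$ belongs to $L^1(\R^+)$, so \req{DiffResolv} gives $r'\in L^1(\R^+)$; hence $r$ has a limit at infinity, which must vanish because $r\in L^1(\R^+)$. Thus $r$ is continuous and bounded, so $r\in L^1(\R^+)\cap L^\infty(\R^+)\subset L^2(\R^+)$. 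With $F=F_0+G\dot{W}$, the integral $\int_{-\infty}^t r(t-s)F_0(s)\ds$ is then a well-defined mean-square integral, using $r\in L^1(\R^+)$ together with the mean-square boundedness and continuity of $F_0$, while the Wiener integral $\int_{-\infty}^t r(t-s)G\dW(s)$ is well-defined because $r\in L^2(\R^+)$; the same estimates show that $V_\infty$ is mean-square bounded.

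Next I would verify that $V_\infty$ is a strong solution of \req{GLEinf}, i.e.\ that it satisfies the integrated form of \req{GLEinf} on every bounded interval. The guiding formal computation is to differentiate $\int_{-\infty}^t r(t-s)F(s)\ds$ by the Leibniz rule: the boundary term equals $r(0)F(t)=F(t)$ since $r(0)=I$, and inserting the integro-differential equation \req{DiffResolv} for $r'$ produces $-DV_\infty(t)$ together with a double integral $\int_{-\infty}^t\!\int_0^{t-s}\gamma(t-s-\sigma)r(\sigma)\dsigma\,F(s)\ds$. Substituting $\tau=s+\sigma$ and interchanging the order of integration turns this double integral into $\int_{-\infty}^t\gamma(t-\tau)V_\infty(\tau)\dtau$, which is precisely the memory term of \req{GLEinf}. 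I would carry this out rigorously in integrated form: the deterministic $F_0$-part is a plain Fubini interchange, whereas the white-noise part requires a stochastic Fubini theorem to exchange the deterministic $(\sigma,s)$-integration with the Wiener integration.

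For uniqueness among bounded solutions, let $U$ be the difference of two such solutions; by linearity $U$ is mean-square bounded and solves the homogeneous version of \req{GLEinf}. Fixing $t_0$ and splitting the memory integral at $t_0$, the pre-$t_0$ contribution $h(t)=-\int_{-\infty}^{t_0}\gamma(t-s)U(s)\ds$ acts as a continuous, bounded forcing, so on $[t_0,\infty)$ the process $U$ solves an initial value problem of the form \req{GLE}. The resolvent representation as in \req{pathwise} then yields $U(t)=r(t-t_0)U(t_0)+\int_{t_0}^t r(t-s)h(s)\ds$. Letting $t_0\to-\infty$, the first term tends to zero because $r(t-t_0)\to0$ while $U(t_0)$ stays bounded, and the second term is dominated by $\int_0^{t-t_0}\norm{r(u)}_2\bigl(\int_{(t-t_0)-u}^\infty\norm{\gamma(v)}_2\rmd v\bigr)\rmd u$, which vanishes by the $L^1$-tail estimates on $r$ and $\gamma$; hence $U\equiv0$.

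Finally, the convergence assertion follows by subtracting \req{pathwise} from the splitting $V_\infty(t)=\int_{-\infty}^0 r(t-s)F(s)\ds+\int_0^t r(t-s)F(s)\ds$, which cancels the common $\int_0^t$-term and leaves $V_\infty(t)-V(t)=\int_{-\infty}^0 r(t-s)F(s)\ds-r(t)V_0$. By the triangle inequality in $L^2$ it suffices to bound the two terms separately: $\E\bigl(\norm{r(t)V_0}_2^2\bigr)\to0$ because $r(t)\to0$ and $V_0$ has finite variance, and the past-memory term tends to zero in mean square because $\int_{-\infty}^0 r(t-s)F_0(s)\ds$ is controlled by the $L^1$-tail $\int_t^\infty\norm{r(u)}_2\rmd u$ while the Wiener part $\int_{-\infty}^0 r(t-s)G\dW(s)$ has variance $\int_t^\infty\norm{r(u)G}^2\rmd u$, both of which vanish as $t\to\infty$; no independence of $V_0$ and $F$ is needed. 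I expect the main obstacle to be the rigorous proof that \req{Vinfty} solves \req{GLEinf}, where the Leibniz differentiation and the order-of-integration swap must be recast in integrated form and the white-noise contribution handled through a stochastic Fubini argument.
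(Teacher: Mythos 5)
Your proposal is correct and follows the paper's proof in all but one component. The well-definedness step ($r'\in L^1(\R^+)$ from \req{DiffResolv}, hence $r$ continuous with vanishing limit at infinity, bounded, and in $L^2(\R^+)$), the verification that \req{Vinfty} is a strong solution (carried out in integrated form, with the deterministic $F_0$-part handled by an ordinary Fubini interchange and the white-noise part by a stochastic Fubini argument after substituting \req{DiffResolv} for $r'$ -- the paper does exactly this, merely packaging it as the split $V_\infty=V_{1,\infty}+V_{2,\infty}$), and the convergence step (cancelling the common integral over $[0,t]$, then bounding the three remaining terms by $\bigl(\int_t^\infty\norm{r(u)}_2\rmd u\bigr)^2$, $\int_t^\infty\norm{r(u)}_2^2\rmd u$ via the It\^o isometry, and $\norm{r(t)}_2^2$, with no independence of $V_0$ and $F$ needed) all coincide with the paper's argument. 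The genuine difference is uniqueness: the paper disposes of the homogeneous full-line equation by citing \cite[Theorem~3.3.10]{GLS90}, applied to the whole-line convolution with $\gamma$ extended by zero, whereas you give a self-contained argument -- splitting the memory integral at $t_0$, treating the pre-$t_0$ contribution as a bounded forcing $h$, invoking the variation-of-constants representation on $[t_0,\infty)$, and letting $t_0\to-\infty$ using that $r(t-t_0)\to0$ and that the convolution of $\norm{r}_2\in L^1$ with the bounded, decaying tail function $\int_x^\infty\norm{\gamma(v)}_2\rmd v$ vanishes at infinity. Your tail estimate is correct (split the convolution at $(t-t_0)/2$ to see it), so your route trades the external citation for an elementary two-line computation; the paper's citation is shorter and inherits the full generality of the Gripenberg--Londen--Staffans theory, while your version makes the mechanism of uniqueness explicit and keeps the proof self-contained. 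One small point to mind when writing it up: ``bounded'' in the theorem is mean-square boundedness, so either run your $t_0\to-\infty$ estimate on $\E\bigl(\norm{U(t)}_2^2\bigr)$ after taking the variation-of-constants identity pathwise, or note that the same inequality holds with $L^2(\Omega)$-norms throughout.
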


The proof of this result, which is similar to the one of Theorem~\ref{Thm:GLE},
can also be found in \ref{App:A}. The major difference is the fact
that the differential resolvent $r$ now needs to belong to $L^1(\R^+)$
in order to bound the contribution of the fluctuating force from
the infinite time horizon.

\begin{remark}
\label{Rem:PW}
\rm
Concerning this additional requirement,
the Paley-Wiener theory gives a necessary and sufficient condition for it
to be true: 
%for the assumption $r\in L^1(\R^+)$ in Theorem~\ref{Thm:GLEinf}: 
Namely, there holds $r\in L^1(\R^+)$, if and only if the $d\times d$ matrix
\bdm
\begin{array}{c}
   A(\zeta) \,=\, \zeta I \,+\, D \,+\, \L\gamma(\zeta)
%                  \displaystyle{\int_0^\infty e^{-\zeta t}\gamma(t)\dt}
%   \,\in\, \C^{d\times d}
   \\[1.5ex]
   \text{is invertible for every $\zeta\in\C$ with $\Real\zeta\geq 0$}\,,
\end{array}
\edm
compare \cite[Theorem~3.3.5]{GLS90},
where %$\L$ denotes, again, the Laplace transform.
\bdm
   \L\gamma(\zeta) \,=\, \int_0^\infty e^{-\zeta t}\gamma(t)\dt
\edm
denotes the Laplace transform of $\gamma$.

If $\Sigma\in\R^{d\times d}$ is any symmetric and positive definite matrix,
the above criterion is fulfilled, if and only if $A(\zeta)\Sigma x\neq 0$ 
for every $\zeta\in\C$ with $\Real\zeta\geq 0$ and every 
$x\in\C^n\setminus\{0\}$.
Since $\gamma$ belongs to $L^1(\R^+)$, the function 
$\zeta\mapsto x^*A(\zeta)\Sigma x$ for fixed $x\in\C^n\setminus\{0\}$ is 
analytic in the complex right half plane with continuous boundary values on 
the imaginary axis. Its real part being a harmonic function, 
it follows from the maximum principle that $x^*A(\zeta)\Sigma x\neq 0$ 
for all $\zeta$ in the closed right-half plane, if
\bdm
   2\,\Real \bigl(x^*A(\rmi\omega)\Sigma x\bigr)
   \,=\, x^*\bigl(
               A(\rmi\omega)\Sigma\,+\, \Sigma A(\rmi\omega)^*
            \bigr)x \,>\, 0 
\edm
for every $\omega\in\R$.
Accordingly, a sufficient condition for $r$ to belong to $L^1(\R^+)$ is that
\bdm
   A(\rmi\omega)\Sigma \,+\, \Sigma A(\rmi\omega)^*
   \,\Loewnergr\, 0 \qquad \text{for all $\omega\in\R$}\,,
\edm
i.e., that the corresponding matrices are positive definite.
Recurring to the function $\gamma_\Sigma$ defined in \req{gamma-Sigma},
the above criterion is equivalent to
\be{PW}
   D\Sigma \,+\, \Sigma D^* \,+\, \gammahat_\Sigma(\omega) \,\Loewnergr\, 0
   \qquad \text{for all $\omega\in\R$}\,.
\ee
In summary: If \req{PW} holds true for any symmetric and positive definite
matrix $\Sigma$, then the differential resolvent $r$ belongs to $L^1(\R^+)$.
\fin
\end{remark}

The following theorem, the proof of which is given in \ref{App:C},
is the main result of this section.
It shows that \emph{every} bounded solution of \req{GLEinf}
is a stationary one, provided the fluctuating force satisfies 
Assumption~\ref{Ass:F}. 

\begin{theorem}
\label{Thm:fdtinf}
Assume that $\gamma\in L^1(\R^+)$, and that the differential resolvent
$r$ of \req{DiffResolv} also belongs to $L^1(\R^+)$.
If $F$ satisfies Assumption~\ref{Ass:F} then the unique bounded solution
of \req{GLEinf} is a stationary Gaussian process with autocorrelation
function
\be{Thm:fdtinf}
   C_{V_\infty}(t) 
   \,=\, \int_0^\infty r(t+\tau)GG^*r(\tau)^*\dtau
         \,+\, \int_0^\infty\int_0^\infty
                  r(\tau)\chi(t+\tau'-\tau)r(\tau')^*\dtau'\dtau
\ee
for $t\geq 0$, where
\be{chiinf}
   \chi \,=\, \begin{cases}
                 \varphi G^* \,+\, G\phitrev^* \,+\, \varphi*\phitrev^* \,, &
                 \text{if \ $\Wtilde = W$}\,, \\
                 \varphi*\phitrev^*\,, & 
                 \text{if \ $W$ and $\Wtilde$ are independent}\,.
              \end{cases}
\ee
Moreover, if $\Sigma\in\R^{d\times d}$ is symmetric positive definite
and $\gamma_\Sigma$ is given by \req{gamma-Sigma}, then the 
autocorrelation function of $V_\infty$ can alternatively be written as
\be{Thm:fdtinf2}
\begin{aligned}
   C_{V_\infty}(t) 
   &\,=\, r(t)\Sigma
      \,+\, \int_0^\infty r(t+\tau)(GG^*-D\Sigma-\Sigma D^*)r(\tau)^*\dtau\\[1ex]
   &\phantom{\,=\ }
      +\, \int_0^\infty\int_0^\infty
               r(\tau)(\chi-\gamma_\Sigma)(t+\tau'-\tau)r(\tau')^*\dtau'\dtau\,.
\end{aligned}
\ee
\end{theorem}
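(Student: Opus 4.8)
The plan is to start from the explicit representation $V_\infty(t)=\int_{-\infty}^t r(t-s)F(s)\ds=\int_0^\infty r(u)F(t-u)\,\rmd u$ provided by Theorem~\ref{Thm:GLEinf}, and to split the fluctuating force as $F=F_0+G\dot W$, so that $V_\infty(t)=\int_{-\infty}^t r(t-s)F_0(s)\ds+\int_{-\infty}^t r(t-s)G\dW(s)$; the second term is a well-defined It\^o integral because $r\in L^1(\R^+)\cap L^\infty(\R^+)\subset L^2(\R^+)$. Since each summand is a linear functional of the Gaussian pair $(W,\Wtilde)$, $V_\infty$ is a centered Gaussian process. For stationarity it then suffices to show that $\E(V_\infty(s+t)V_\infty(s)^*)$ is independent of $s$; this follows from the stationarity of $F_0$ together with the stationary independent increments of $W$, so that all finite-dimensional distributions of $V_\infty$ are shift invariant.

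First I would establish~\req{Thm:fdtinf}. Writing $V_\infty=U+Z$ with $U(t)=\int_{-\infty}^t r(t-s)F_0(s)\ds$ and $Z(t)=\int_{-\infty}^t r(t-s)G\dW(s)$, I would compute $C_{V_\infty}(t)=\E(V_\infty(t)V_\infty(0)^*)$ for $t\geq0$ as the sum of the four covariances. The pure white-noise term yields, by the It\^o isometry and the substitution $\tau=-s$, exactly $\E(Z(t)Z(0)^*)=\int_0^\infty r(t+\tau)GG^*r(\tau)^*\dtau$. The term $\E(U(t)U(0)^*)$ becomes, after substituting $\tau=t-s$, $\tau'=-s'$ and inserting $C_{F_0}=\varphi*\phitrev^*$ from~\req{CF0}, the double integral $\int_0^\infty\int_0^\infty r(\tau)(\varphi*\phitrev^*)(t+\tau'-\tau)r(\tau')^*\dtau'\dtau$. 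When $\Wtilde=W$ the two cross terms no longer vanish: a further application of the It\^o isometry gives $\E(U(t)Z(0)^*)=\int_0^\infty\int_0^\infty r(\tau)(\varphi G^*)(t+\tau'-\tau)r(\tau')^*\dtau'\dtau$, and symmetrically $\E(Z(t)U(0)^*)$ contributes $G\phitrev^*$, whereas for independent $W,\Wtilde$ both cross terms drop out. Collecting the kernels reproduces precisely the function $\chi$ of~\req{chiinf}, and hence~\req{Thm:fdtinf}; this is the rigorous counterpart of the autocorrelation distribution $C_F=\chi+GG^*\delta$ recorded in Remark~\ref{Rem:distribution}.

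The heart of the proof is the Lyapunov-type identity
\[
   r(t)\Sigma \,=\, \int_0^\infty r(t+\tau)(D\Sigma+\Sigma D^*)r(\tau)^*\dtau
   \,+\, \int_0^\infty\int_0^\infty r(\tau)\,\gamma_\Sigma(t+\tau'-\tau)\,r(\tau')^*\dtau'\dtau, \qquad t\geq0,
\]
which I would prove by taking full-line Fourier transforms. Denoting by $\widehat r$ the Fourier transform of the (one-sided) resolvent, Laplace-transforming~\req{DiffResolv} and using the Paley--Wiener discussion of Remark~\ref{Rem:PW} gives $\widehat r(\omega)=A(\rmi\omega)^{-1}$ with $A(\rmi\omega)=\rmi\omega I+D+\L\gamma(\rmi\omega)$. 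A direct computation, interchanging the order of integration, shows that any kernel of the form $\int_0^\infty r(t+\tau)Mr(\tau)^*\dtau+\int_0^\infty\int_0^\infty r(\tau)K(t+\tau'-\tau)r(\tau')^*\dtau'\dtau$ has Fourier transform $\widehat r(\omega)\bigl(M+\widehat K(\omega)\bigr)\widehat r(\omega)^*$; applied to the right-hand side above with $M=D\Sigma+\Sigma D^*$ and $K=\gamma_\Sigma$ this yields $\widehat r(\omega)\bigl(D\Sigma+\Sigma D^*+\gammahat_\Sigma(\omega)\bigr)\widehat r(\omega)^*$. On the other hand the function $g$ given by $g(t)=r(t)\Sigma$ for $t\geq0$ and $g(t)=\Sigma r(-t)^*$ for $t<0$ has Fourier transform $\widehat r(\omega)\Sigma+\Sigma\widehat r(\omega)^*=\widehat r(\omega)\bigl(A(\rmi\omega)\Sigma+\Sigma A(\rmi\omega)^*\bigr)\widehat r(\omega)^*$, and the crucial algebraic step is the cancellation of the $\rmi\omega$-terms together with $\L\gamma(\rmi\omega)\Sigma+\Sigma\,\L\gamma(\rmi\omega)^*=\gammahat_\Sigma(\omega)$, which identifies $A(\rmi\omega)\Sigma+\Sigma A(\rmi\omega)^*=D\Sigma+\Sigma D^*+\gammahat_\Sigma(\omega)$. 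Both sides then have the same Fourier transform; since $\gamma\in L^1(\R^+)$ and $r\in L^1(\R^+)$ both kernels lie in $L^1(\R)$, so Fourier uniqueness gives the identity, and for $t\geq0$ the left-hand side equals $r(t)\Sigma$.

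Finally, I would obtain~\req{Thm:fdtinf2} by subtracting this identity from~\req{Thm:fdtinf}: writing $C_{V_\infty}(t)=r(t)\Sigma+\bigl(C_{V_\infty}(t)-r(t)\Sigma\bigr)$ and replacing $r(t)\Sigma$ inside the bracket by the two integrals of the Lyapunov identity turns the $GG^*$-weight of~\req{Thm:fdtinf} into $GG^*-D\Sigma-\Sigma D^*$ and the kernel $\chi$ into $\chi-\gamma_\Sigma$, which is exactly~\req{Thm:fdtinf2}. I expect the main obstacle to be the Lyapunov identity of the previous paragraph; the principal points to watch are the justification of the Fourier and Laplace manipulations (in particular that $\widehat r(\omega)=A(\rmi\omega)^{-1}$ holds on the imaginary axis, which relies on $r\in L^1(\R^+)$) and the careful bookkeeping of the correlated-noise cross terms in the computation of~\req{Thm:fdtinf}.
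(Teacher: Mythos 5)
Your proposal is correct, and its overall architecture matches the paper's: the paper likewise decomposes $V_\infty=V_{1,\infty}+V_{2,\infty}$ into colored and white-noise parts, computes the four covariance blocks (importing the calculation \req{CV-tmp} from the proof of Theorem~\ref{Thm:fdt}), and obtains stationarity and \req{Thm:fdtinf} by substituting away the $s$-dependence; your It\^o-isometry bookkeeping of the cross terms for $\Wtilde=W$ reproduces exactly the kernel $\chi$ of \req{chiinf}. Where you genuinely diverge is in the proof of the Lyapunov-type identity $r(t)\Sigma=\int_0^\infty r(t+\tau)(D\Sigma+\Sigma D^*)r(\tau)^*\dtau+\int_0^\infty\int_0^\infty r(\tau)\gamma_\Sigma(t+\tau'-\tau)r(\tau')^*\dtau'\dtau$ that converts \req{Thm:fdtinf} into \req{Thm:fdtinf2}. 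The paper proves this entirely in the time domain, exactly as for \req{double} in the proof of Theorem~\ref{Thm:fdt}: it splits the $\gamma_\Sigma$-integral according to the sign of the argument, replaces each inner integral via the second resolvent equation \req{DiffResolv2}, and recognizes the remaining $r'\Sigma r^*$ and $r\Sigma(r')^*$ terms as a total $\tau$-derivative that integrates to $r(t)\Sigma$ using $r(0)=I$ and $r(\tau)\to0$. Your frequency-domain route instead uses $\rhat_0(\omega)=\L r(\rmi\omega)=A(\rmi\omega)^{-1}$ (which is legitimately obtained from \req{Lr} by continuity up to the imaginary axis, with invertibility of $A(\rmi\omega)$ supplied by the Paley--Wiener criterion of Remark~\ref{Rem:PW}), the algebra $A(\rmi\omega)\Sigma+\Sigma A(\rmi\omega)^*=D\Sigma+\Sigma D^*+\gammahat_\Sigma(\omega)$, and $L^1$-injectivity of the Fourier transform; all the needed integrability is available, since $\gamma_\Sigma\in L^1(\R)$ already follows from $\gamma\in L^1(\R^+)$ (the stronger hypothesis $\gammahat_\Sigma\in L^1(\R)$ of Proposition~\ref{Prop:fdt} is not required), and the a.e.\ equality upgrades to everywhere by continuity of both kernels. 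The trade-off: your argument is shorter and makes the spectral structure $\widehat{C}_{V_\infty}=\rhat_0\bigl(GG^*+\widehat\chi\bigr)\rhat_0^*$ transparent, but it is tied to the infinite horizon, where shift invariance and $r\in L^1(\R^+)$ hold; the paper's time-domain manipulation is the one forced on it in the finite-horizon Theorem~\ref{Thm:fdt}, where the integrals run over $[0,s]$ and Fourier methods are unavailable, so reusing it keeps the two proofs parallel at no extra cost.
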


It follows from Theorems~\ref{Thm:GLEinf} and \ref{Thm:fdtinf} 
under the given assumptions that
the unique solution of \emph{every} initial value problem~\req{GLE} approaches
some stationary Gaussian process as $t\to+\infty$, 
whenever the fluctuating force $F$ satisfies Assumption~\ref{Ass:F}. 
In a physical context
some of them may be consistent with the equipartition theorem, others may not;
see the following corollaries and remarks.

\begin{corollary}
\label{Cor:fdt-relinf}
Let $\Sigma\in\R^{d\times d}$ be symmetric positive definite 
and $\gamma\in L^1(\R^+)$.
Assume further that the Fourier transform of $\gamma_\Sigma$ of 
\req{gamma-Sigma} belongs to $L^1(\R)$, and that
\be{SylvesterII}
   \gammahat_\Sigma(\omega) \,+\, D\Sigma \,+\, \Sigma D^* \,\Loewnergr\, 0
   \qquad \text{for all $\omega\in\R$}\,.
\ee
Then there exists $G\in\R^{d\times d}$ satisfying
\be{GGII}
   D\Sigma \,+\, \Sigma D^* \,=\, GG^*\,,
\ee
and for $\Wtilde=W$ and any such $G$
the fluctuation-dissipation relation~\req{fdt-relation} 
has a unique solution $\varphi\in L^2(\R)$ with values in $\R^{d\times d}$.
With this choice of $G$ and $\varphi$ and the corresponding fluctuating force 
$F$ of Assumption~\ref{Ass:F}, the bounded solution $V_\infty$
of the generalized Langevin equation~\req{GLEinf} with infinite time horizon
is a stationary Gaussian process with autocorrelation function
\bdm
   C_{V_\infty}(t) \,=\, r(t)\Sigma\,, \qquad t\geq 0\,,
\edm
where $r$ is the associated differential resolvent~\req{DiffResolv}.
\end{corollary}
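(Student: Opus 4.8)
The plan is to obtain the statement by chaining together the three results already established, the key observation being that the strict definiteness hypothesis~\req{SylvesterII} does double duty. To begin, I would note that \req{SylvesterII} implies the weaker semidefinite condition~\req{SylvesterI}; together with the standing assumptions $\gamma\in L^1(\R^+)$ and $\gammahat_\Sigma\in L^1(\R)$, this puts us in the situation of Proposition~\ref{Prop:fdt}, whose equivalent statement~(i) then applies. It furnishes at once the existence of matrices $G$ solving \req{GG}---which is identical to \req{GGII}---and, for any such $G$, a unique $\varphi\in L^2(\R)$ with values in $\R^{d\times d}$ solving \req{Prop:fdt}. Since \req{Prop:fdt} is precisely the fluctuation-dissipation relation~\req{fdt-relation} in the case $\Wtilde=W$, the first two assertions of the corollary are thereby settled.

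The next step is to secure the integrability of the differential resolvent. Here I would observe that \req{SylvesterII} is verbatim the sufficient condition~\req{PW} derived from the Paley-Wiener theory in Remark~\ref{Rem:PW}; hence $r\in L^1(\R^+)$. This is the only hypothesis of Theorem~\ref{Thm:fdtinf} not yet available. With $\gamma\in L^1(\R^+)$, $r\in L^1(\R^+)$, and the fluctuating force $F$ built from the $G$ and $\varphi$ above so that Assumption~\ref{Ass:F} holds with $\Wtilde=W$, Theorem~\ref{Thm:fdtinf} applies and tells us that $V_\infty$ is a stationary Gaussian process whose autocorrelation admits the alternative representation~\req{Thm:fdtinf2}.

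It then remains to check that the two correction integrals in \req{Thm:fdtinf2} vanish for the present data. The first integrand carries the factor $GG^*-D\Sigma-\Sigma D^*$, which is zero by \req{GGII}. For the second, I would insert the definition~\req{chiinf} of $\chi$ for $\Wtilde=W$ and compare it with the fluctuation-dissipation relation~\req{fdt-relation}: the two coincide, so $\chi=\gamma_\Sigma$ and the integrand $(\chi-\gamma_\Sigma)$ vanishes identically. What survives is $C_{V_\infty}(t)=r(t)\Sigma$ for $t\geq 0$, as asserted.

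I expect the only subtle point to be organizational rather than analytical: one must recognize that the single strict inequality~\req{SylvesterII} simultaneously delivers the semidefinite condition needed for solvability of the fluctuation-dissipation relation via Proposition~\ref{Prop:fdt} and the Paley-Wiener condition~\req{PW} guaranteeing $r\in L^1(\R^+)$, and that the very $\varphi$ produced by Proposition~\ref{Prop:fdt} is exactly the one for which $\chi$ collapses to $\gamma_\Sigma$. Once these identifications are in place, no fresh estimate is needed and the conclusion follows by substitution into \req{Thm:fdtinf2}.
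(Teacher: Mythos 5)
Your proposal is correct and follows essentially the same route as the paper's own proof: the strict condition~\req{SylvesterII} yields $r\in L^1(\R^+)$ via Remark~\ref{Rem:PW}, Proposition~\ref{Prop:fdt} supplies $G$ and the unique $\varphi$ solving \req{fdt-relation} with $\Wtilde=W$, and substituting into \req{Thm:fdtinf2} of Theorem~\ref{Thm:fdtinf} kills both correction terms since $GG^*=D\Sigma+\Sigma D^*$ and $\chi=\gamma_\Sigma$. Your write-up is in fact slightly more explicit than the paper's (which leaves the vanishing of the first correction integral implicit), and no gap remains.
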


\begin{proof}
Due to assumption~\req{SylvesterII} and Remark~\ref{Rem:PW}
the differential resolvent $r$ belongs to $L^1(\R^+)$.
Furthermore, it follows from Proposition~\ref{Prop:fdt} that there exists
a solution $G\in\R^{d\times d}$ of \req{GGII} and an associated density 
$\varphi\in L^2(\R)$ with values in $\R^{d\times d}$ satisfying the 
fluctuation-dissipation relation~\req{fdt-relation}.
%Since \req{SylvesterII} is stronger than the assumption~\req{SylvesterI}
%of Proposition~\ref{Prop:fdt}, the solution $G$ of the 
%Sylvester equation~\req{GGII} is now unique.
By Theorem~\ref{Thm:fdtinf}, the bounded solution $V_\infty$ of \req{GLE} 
with the associated fluctuating force $F$ is a stationary Gaussian process, 
and from \req{Thm:fdtinf2} it follows that its 
autocorrelation function is given by $r(t)\Sigma$, because $\chi$ of 
\req{chiinf} coincides with $\gamma_\Sigma$ according to \req{fdt-relation}.
\end{proof}

\begin{remark}
\label{Rem:McNg18}
\rm
If $\gamma_\Sigma$ itself is a function of positive type, 
and if $D\Sigma+\Sigma D^*$ is positive definite, 
then one can define $F$ in Assumption~\ref{Ass:F}
with two independent Brownian motions $W$ and $\Wtilde$ to obtain the same
stationary solution as in Corollary~\ref{Cor:fdt-relinf}. 
For $\Sigma$ of \req{equi} this result, which is
compatible with the equipartition theorem, was obtained in \cite{McNg18}.
\fin
\end{remark}

\begin{remark}
\label{Rem:Riedle}
\rm
Riedle~\cite{Ried03} proved a special case of Theorem~\ref{Thm:fdtinf},
in which $D=0$ in \req{GLEinf} and the fluctuating force $F$ 
is pure white noise, i.e., $\varphi=0$ in Assumption~\ref{Ass:F}. 
He showed that the solution
$V_\infty$ constructed in Theorem~\ref{Thm:GLEinf} is a stationary Gaussian
process, and its autocorrelation function is given by
\be{Rem:Riedle}
   C_{V_\infty}(t) \,=\, \int_0^\infty r(t+\tau)r(\tau)^*\dtau\,, \qquad t\geq 0\,;
\ee
compare \req{Thm:fdtinf} of Theorem~\ref{Thm:fdtinf}. 
\fin
\end{remark}

In general the covariance matrix $C_{V_\infty}(0)$ of the stationary solution
of \req{GLEinf} is difficult to analyze; 
this is apparent from \req{Rem:Riedle}, for example. 
However, in the scalar case it is obvious that with an appropriate rescaling of 
any of the stationary fluctuating forces of Assumption~\ref{Ass:F} 
the corresponding solution $V_\infty$ of \req{GLEinf} can have any prescribed
variance. Nonetheless, the resulting processes
(e.g., the ones of Remark~\ref{Rem:McNg18} and Remark~\ref{Rem:Riedle})
will be different, because their autocorrelation functions are.

Concerning the variety of possible stationary solutions of \req{GLEinf}
the following can be shown.
%Another interesting consequence of Theorem~\ref{Thm:fdtinf} is the following.

\begin{corollary}
\label{Cor:fdtinf-D0}
Consider the generalized Langevin equation~\req{GLEinf} with $D=0$ and
$\gamma\in L^1(\R^+)$, and assume that the differential resolvent $r$ 
of \req{DiffResolv} also belongs to $L^1(\R^+)$.
Let $\psi\in H^3(\R)$ be a function of positive type with three (weak)
derivatives in $L^2(\R)$.
Then there exists a stationary fluctuating force $F=F_0$ given by
\req{F0phi} for some appropriate $\varphi\in L^2(\R)$, 
such that the autocorrelation function 
of the stationary solution $V_\infty$ of \req{GLEinf}
%with fluctuating force $F=F_0$ as in \req{F0phi}, 
%such that the autocorrelation function of the
%corresponding stationary solution of \req{GLEinf} 
coincides with $\psi$. 
\end{corollary}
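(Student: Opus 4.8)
The plan is to pass to the frequency domain, where the whole construction becomes transparent, and to read off the density $\varphi$ from the spectral density of the prescribed target process $\psi$. With $D=0$ the differential resolvent \req{DiffResolv} satisfies $r'=-\gamma*r$, $r(0)=I$, so its Laplace transform is $\L r(\zeta)=A(\zeta)^{-1}$ with $A(\zeta)=\zeta I+\L\gamma(\zeta)$; since $r\in L^1(\R^+)$ by hypothesis, $A(\rmi\omega)$ is invertible for every $\omega\in\R$ (Remark~\ref{Rem:PW}) and $\rhat(\omega)=A(\rmi\omega)^{-1}$. Taking $G=0$ in Assumption~\ref{Ass:F}, so that $F=F_0$ and the two cases in \req{chiinf} collapse to $\chi=\varphi*\phitrev^*$ with $\chihat=\phihat\phihat^*$ by \req{CF0}--\req{CF0hat}, I would compute the spectral density of $V_\infty$ from the autocorrelation formula \req{Thm:fdtinf}: extending $C_{V_\infty}$ to all $t$ by Hermitian symmetry, substituting $u=t+\tau'-\tau$ and using Fubini gives the transfer relation
\[
   \widehat{C_{V_\infty}}(\omega)
   \,=\, \rhat(\omega)\,\chihat(\omega)\,\rhat(\omega)^*
   \,=\, A(\rmi\omega)^{-1}\phihat(\omega)\phihat(\omega)^*A(\rmi\omega)^{-*}\,.
\]

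This dictates the ansatz. I would set
\[
   \phihat(\omega) \,=\, A(\rmi\omega)\,\widehat\psi(\omega)^{1/2}\,,
   \qquad \omega\in\R\,,
\]
where $\widehat\psi(\omega)^{1/2}$ denotes the positive semidefinite square root, which exists because $\psi$ is of positive type, i.e.\ $\widehat\psi(\omega)$ is Hermitian positive semidefinite (in the scalar case this is just $\sqrt{\widehat\psi}$). With this choice the transfer relation yields $\widehat{C_{V_\infty}}=A^{-1}A\widehat\psi A^*A^{-*}=\widehat\psi$, and hence $C_{V_\infty}=\psi$ after Fourier inversion, which is legitimate since $\widehat\psi\in L^1(\R)$ (a consequence of $\psi\in H^3(\R)$). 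That $\varphi$ is real-valued follows from the Hermitian symmetry $\phihat(-\omega)=\overline{\phihat(\omega)}$, which in turn follows from $A(-\rmi\omega)=\overline{A(\rmi\omega)}$ (as $\gamma$ and $\Sigma$ are real) together with $\widehat\psi(-\omega)^{1/2}=\overline{\widehat\psi(\omega)^{1/2}}$.

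The heart of the argument, and the step I expect to be the main obstacle, is the verification that $\varphi\in L^2(\R)$, i.e.\ that $\phihat\in L^2(\R)$. Since $\norm{\L\gamma(\rmi\omega)}\le\norm{\gamma}_{L^1}$, one has $\norm{A(\rmi\omega)}^2\le C(1+\omega^2)$, and because $\norm{\widehat\psi(\omega)^{1/2}}^2=\operatorname{tr}\widehat\psi(\omega)$ this reduces the claim to
\[
   \int_{-\infty}^\infty \norm{\phihat(\omega)}^2\rmd\omega
   \,\le\, C\int_{-\infty}^\infty (1+\omega^2)\,\operatorname{tr}\widehat\psi(\omega)\,\rmd\omega
   \,<\,\infty\,.
\]
Here $\operatorname{tr}\widehat\psi\ge0$ enters only to the first power, so the weighted $L^2$-control on $\widehat\psi$ cannot be applied directly; instead I would invoke Cauchy--Schwarz with the splitting $(1+\omega^2)=(1+\omega^2)^{3/2}(1+\omega^2)^{-1/2}$ and bound the integral by $\norm{(1+\omega^2)^{3/2}\operatorname{tr}\widehat\psi}_{L^2}\,\norm{(1+\omega^2)^{-1/2}}_{L^2}$. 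The second factor is finite because $\int(1+\omega^2)^{-1}\rmd\omega=\pi$, and the first factor is precisely $\norm{\operatorname{tr}\psi}_{H^3}<\infty$. This is exactly where the hypothesis that $\psi$ has three weak derivatives in $L^2$ is consumed.

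With $\varphi\in L^2(\R)$ established, the fluctuating force $F=F_0$ of \req{F0phi} satisfies Assumption~\ref{Ass:F}, and since both $\gamma$ and $r$ belong to $L^1(\R^+)$, Theorem~\ref{Thm:fdtinf} applies: the bounded solution $V_\infty$ of \req{GLEinf} given by \req{Vinfty} is a stationary Gaussian process, and by the transfer relation above its autocorrelation function has spectral density $\widehat\psi$, whence $C_{V_\infty}=\psi$, as claimed.
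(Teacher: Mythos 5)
Your proposal is correct and takes essentially the same route as the paper's own proof: both set $G=0$, define $\phihat(\omega)=A(\rmi\omega)\,\psihat(\omega)^{1/2}=\rhat_0(\omega)^{-1}\psihat(\omega)^{1/2}$, deduce $\phihat\in L^2(\R)$ from the $H^3$ hypothesis by a Cauchy--Schwarz estimate (the paper phrases this as $\widehat{\psi'''}(\omega)/\omega\in L^1(\R)$, writing $\psihat(\omega)=\frac{\rmi}{\omega^3}\widehat{\psi'''}(\omega)$, which is exactly your weighted bound $\int(1+\omega^2)\operatorname{tr}\psihat\,\rmd\omega<\infty$), and then conclude $\widehat{C}_{V_\infty}=\rhat_0\phihat\phihat^*\rhat_0^*=\psihat$ from Theorem~\ref{Thm:fdtinf} and the convolution theorem. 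The only blemish is cosmetic: your reality check invokes ``$\Sigma$'', which plays no role in this corollary, and your added verification that $\varphi$ is real-valued is a detail the paper leaves implicit.
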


Since the proof of this corollary requires some intermediate result from
the proof of Theorem~\ref{Thm:fdt}, this proof is postponed to \ref{App:C}.

Corollary~\ref{Cor:fdtinf-D0} shows that a huge class of 
different stationary Gaussian processes $V_\infty$ can be obtained
as solutions of \req{GLEinf} by varying
the fluctuating force term. Infinitely many of them will share the same
covariance matrix. In other words, as far as the setting from the introduction
for the model of the dynamics of a macroparticle is concerned,
there are infinitely many different ways of choosing the fluctuating force
in order to satisfy the equipartition theorem asymptotically.

As mentioned in the introduction it is of certain interest to also consider
the cross-correlations between the fluctuating force and the 
stationary solution of \req{GLEinf}.
The outcome, which is also proved in \ref{App:C}, is as follows; 
see also \cite{SKTL10,Hank21,JuSc21}.

\begin{proposition}
\label{Prop:CFV}
Under the assumptions of Theorem~\ref{Thm:fdtinf}, let
$F$ be the fluctuating force and $V_\infty$ be the stationary solution 
of the generalized Langevin equation~\req{GLEinf} with infinite time horizon. 
Then for $s\in\R$ and $t\neq 0$ there holds
\be{Prop:CFV}
   \E\bigl(V_\infty(s+t)F(s)^*\bigr)
   \,=\, \int_0^\infty r(\tau)\chi(t-\tau)\dtau
         \,+\,
         \begin{cases}
            r(t)GG^*\,, & t>0\,,\\
            0\,, & t<0\,,
         \end{cases}
\ee
with $\chi$ of \req{chiinf}.
\end{proposition}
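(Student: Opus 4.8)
The plan is to compute the cross-correlation directly from the explicit representation~\req{Vinfty} of $V_\infty$. By the joint stationarity of $V_\infty$ and $F$ the left-hand side of~\req{Prop:CFV} depends on the lag $t$ alone, so it suffices to evaluate $\E\bigl(V_\infty(s+t)F(s)^*\bigr)$ for fixed $s$. Inserting~\req{Vinfty} gives, formally,
\[
   \E\bigl(V_\infty(s+t)F(s)^*\bigr)
   \,=\, \int_{-\infty}^{s+t} r(s+t-\sigma)\,\E\bigl(F(\sigma)F(s)^*\bigr)\dsigma\,,
\]
and the inner expectation is precisely the autocorrelation distribution of $F$ identified in Remark~\ref{Rem:distribution}, namely $\E\bigl(F(\sigma)F(s)^*\bigr)=\chi(\sigma-s)+GG^*\delta(\sigma-s)$ with $\chi$ as in~\req{chiinf}. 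Substituting this and changing variables $\tau=s+t-\sigma$ reproduces the two terms of~\req{Prop:CFV}; the substance of the proof is to make this rigorous, since $F$ contains a white-noise part and is only a random distribution.

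To this end I would split $F=F_0+G\dot W$ and expand the expectation into its four bilinear contributions. The colored–colored term uses the autocorrelation~\req{CF0} of $F_0$, i.e.\ $\E\bigl(F_0(\sigma)F_0(s)^*\bigr)=(\varphi*\phitrev^*)(\sigma-s)$. The two mixed terms use the cross-correlation between $F_0$ of~\req{F0phi} and the driving motion $W$, which vanishes when $W$ and $\Wtilde$ are independent and equals $\E\bigl(F_0(\sigma)\dot W(s)^*\bigr)=\kappa\,\varphi(\sigma-s)$ when $\Wtilde=W$ (here $\kappa\in\{0,1\}$ distinguishes the two cases of Assumption~\ref{Ass:F}); together with the identity $\phitrev(\sigma-s)^*=\varphi(s-\sigma)^*$ these yield integrands $\kappa\,\varphi(\sigma-s)G^*$ and $\kappa\,G\phitrev(\sigma-s)^*$. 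Applying the substitution $\tau=s+t-\sigma$ to each of these three integrals and summing produces
\[
   \int_0^\infty r(\tau)\bigl[(\varphi*\phitrev^*)+\kappa(\varphi G^*+G\phitrev^*)\bigr](t-\tau)\dtau
   \,=\, \int_0^\infty r(\tau)\chi(t-\tau)\dtau\,,
\]
which is the first term of~\req{Prop:CFV}. Convergence of these integrals and the interchange of expectation with integration are justified by $r\in L^1(\R^+)$ together with $\varphi\in L^2(\R)$ and the boundedness and continuity of $\varphi*\phitrev^*$.

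The remaining white–white contribution is where the hypothesis $t\neq0$ enters, and I expect it to be the main technical point. By the It\^o isometry this term equals $\int_{-\infty}^{s+t}r(s+t-\sigma)GG^*\,\delta(\sigma-s)\dsigma$, so the mass of the delta at $\sigma=s$ is captured only when $s$ lies strictly inside the integration range $(-\infty,s+t)$, i.e.\ when $t>0$, in which case it contributes $r(t)GG^*$; for $t<0$ the point $\sigma=s$ lies outside the range and the contribution is $0$. To argue this rigorously I would replace $\dot W(s)$ by the increment $h^{-1}\bigl(W(s+h)-W(s)\bigr)$ (equivalently, mollify $F$), compute the resulting ordinary integral, and pass to the limit $h\to0$: for $t\neq0$ the singularity stays bounded away from the endpoint $\sigma=s+t$ and the limit is unambiguous, whereas at $t=0$ it sits exactly on the boundary and the cross-correlation jumps by $r(0)GG^*=GG^*$ there, which explains why $t=0$ must be excluded. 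Combining the interior and exterior cases with the first term then gives precisely~\req{Prop:CFV}.
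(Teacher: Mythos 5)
Your proposal is correct, and its skeleton coincides with the paper's: you split $F=F_0+G\dot W$ and $V_\infty=V_{1,\infty}+V_{2,\infty}$ into the same four bilinear contributions, identify the three absolutely continuous correlations (your integrands match the paper's \req{CV1F0}, \req{CV2F0}, and \req{CV1GW} exactly), and the substitution $\tau=s+t-\sigma$ assembles them into $\int_0^\infty r(\tau)\chi(t-\tau)\dtau$ just as in the paper. The one genuine difference is how the distributional pairings with $\dot W$ are made rigorous. The paper stays inside the stationary-random-distribution framework it set up in Remark~\ref{Rem:distribution}: it pairs $V_{1,\infty}$ and $V_{2,\infty}$ against test functions $f,g\in C_0^\infty(\R)$, applies Fubini and the It\^o isometry to compute $\E\bigl(\scalp{V_{2,\infty},f}\scalp{G\dot{W},g}\bigr)=\int_{-\infty}^\infty f(t)^*\int_{-\infty}^t r(t-s)GG^*g(s)\ds\dt$, and reads off the kernel $r(t)GG^*\eins_{\{t>0\}}$ directly; the case $t\neq 0$ then needs no limiting argument at all. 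You instead mollify, replacing $\dot W(s)$ by $h^{-1}\bigl(W(s+h)-W(s)\bigr)$ and letting $h\to 0$; this is elementary and works, the limit $r(t)GG^*$ for $t>0$ and $0$ for $t<0$ following from continuity of $r$, and it has the pedagogical virtue of making transparent why $t=0$ is excluded (the value there is mollifier-dependent, e.g.\ $0$ for your forward difference versus $GG^*/2$ for a symmetric one, against the one-sided limits $GG^*$ and $0$). Two small caveats: your mixed term $\E\bigl(F_0(\sigma)\dot W(s)^*\bigr)=\kappa\,\varphi(\sigma-s)$ needs the same mollification treatment, where the $h\to 0$ limit of the averages of $\varphi\in L^2(\R)$ holds only a.e.\ (Lebesgue points) and the interchange with the $\sigma$-integral requires a word of justification — the paper's test-function computation \req{CV1GW} sidesteps this; and your opening appeal to joint stationarity of $(V_\infty,F)$ should be dropped, since in the paper that stationarity is a \emph{consequence} of Proposition~\ref{Prop:CFV} (the computation for arbitrary fixed $s$ produces an $s$-independent answer on its own, so nothing circular actually enters your argument, but the appeal is superfluous).
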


In contrast to the formulation~\req{GLE} of the generalized Langevin
equation as an initial value problem, 
the correlations between velocities and forces 
for the equation with infinite time horizon only depend on the
time lag. Accordingly, the joint process $(V_\infty,F)$ is a stationary one.

Concerning the original setting of Kubo, where 
%$\Sigma$ is given by \req{equi}, 
$D=G=0$, and $\chi$ of \req{chiinf} satisfies
\bdm
   \chi \,=\, \varphi*\phitrev^* \,=\, C_F \,=\, \gamma_\Sigma\,,
\edm
Proposition~\ref{Prop:CFV} shows that for the infinite time horizon limiting
system the cross-correlation between $F(t)$ for $t>0$ and $V(0)$ is given by
\bdmal
   \E\bigl(F(t)V(0)^*\bigr) 
   &\,=\, \E\bigl(V(-t)F(0)^*\bigr)^*
    \,=\, \left(\int_0^\infty r(\tau)\gamma_\Sigma(-t-\tau)\dtau\right)^* \\[1ex]
   &\,=\, \int_0^\infty \gamma(t+\tau)\Sigma r(\tau)^*\dtau\,,
\edmal 
which is nonzero in general. Recall that Kubo's analysis of the generalized
Langevin equation~\req{GLE-Kubo}, on the other hand, assumes that $V(0)$
and the fluctuating force at future times are uncorrelated.

\setcounter{section}{0}
\renewcommand{\thesection}{Appendix~\Alph{section}}
\renewcommand{\theequation}{\Alph{section}.\arabic{equation}}
%%%%%%%%%%%%%%%%%%%%%%%%%%%%%%%%%%%%%%%%%%%%%%%%%%%%%%%%%%%%%%%%%%%%%%% 
\section{Proofs of Theorems~\ref{Thm:GLE} and \ref{Thm:GLEinf}}
\label{App:A}
%%%%%%%%%%%%%%%%%%%%%%%%%%%%%%%%%%%%%%%%%%%%%%%%%%%%%%%%%%%%%%%%%%%%%%
%\renewcommand{\theequation}{A.\arabic{equation}}
%\renewcommand{\thetheorem}{A.\arabic{theorem}}
%\setcounter{equation}{0}
%\setcounter{theorem}{0}
This first appendix contains the proofs of the two existence theorems
for the solution of the generalized Langevin equation with and without
infinite time horizon.

\begin{proofof}{Theorem~\ref{Thm:GLE}}
\begin{subequations}
\label{eq:V12}
For the generalized Langevin equation~\req{GLE}, formulated as an initial
value problem starting at time $t=0$, let
%For a given path of the fluctuating force, let
%\begin{subequations}
%\label{eq:V12}
\be{V1}
   V_1(t) \,=\, r(t) V_0 \,+\, \int_0^t r(t-\tau)F_0(\tau)\dtau
\ee
and
\be{V2}
   V_2(t) \,=\, \int_0^t r(t-\tau)\thsp G\,\dW(\tau) 
%          \,=\, GW(t) \,+\, \int_0^t r'(t-\tau)GW(\tau)\dtau
\ee
\end{subequations}
for all $t\geq 0$ and a given path of the fluctuating force,
in accordance with the decomposition~\req{McNg18}. 
Since $r'$ is continuous, $V_1$ has a mean-square continuous derivative 
\be{V1prime}
   V_1'(t) \,=\, 
   r'(t) V_0 \,+\, F_0(t) \,+\, \int_0^t r'(t-\tau)F_0(\tau)\dtau\,,
\ee
while $V_2$ can be rewritten as
\bdm
   V_2(t) \,=\, GW(t) \,+\, \int_0^t r'(t-\tau)GW(\tau)\dtau\,,
\edm
which reveals that the smoothness properties of $V_2$ are the same as those of 
the Brownian motion. Furthermore, $V_1(0)=V_0$ and $V_2(0)=0$.

The plan is to prove that
\begin{subequations}
\label{eq:V12-GLE}
\be{V1-GLE}
   V_1'(t) \,=\, -DV_1(t) \,-\, \int_0^t \gamma(t-s)V_1(s)\ds \,+\, F_0(t)\,,
\ee
and that $V_2$ is a strong solution of 
\be{V2-GLE}
   \dot{V}_2(t) \,=\, -DV_2(t) \,-\, \int_0^t \gamma(t-s)V_2(s)\ds 
                      \,+\, G\,\dot{W}(t)\,.
\ee
\end{subequations}
Since $V$ of \req{pathwise} is given by $V=V_1+V_2$, this implies 
% is given by \req{pathwise}, and 
that $V$ is the strong solution of \req{GLE}.

Consider \req{V1-GLE} first. 
Inserting \req{DiffResolv} into \req{V1prime} gives
\bdmal
   V_1'(t) 
   &\,=\, -Dr(t)V_0 \,-\, \int_0^t \gamma(t-s)r(s)V_0 \ds \,+\, F_0(t) \\[1ex]
   &\phantom{\,=\ }
   \,-\, \int_0^tDr(t-s)F_0(s)\ds 
   \,-\, \int_0^t\int_0^{t-\tau} \gamma(t-\tau-s)r(s)\ds \,F_0(\tau)\dtau\\[1ex]
   &\,=\, -D \left(r(t)V_0 \,+\, \int_0^t r(t-s)F_0(s)\ds\right) \,+\, F_0(t)
          \\[1ex]
   &\phantom{\,=\ }
   \,-\, \int_0^t \gamma(t-s)r(s)V_0\ds 
   \,-\, \int_0^t\gamma(t-s)\int_0^s r(s-\tau)F_0(\tau)\dtau\ds\,.
\edmal
Therefore \req{V1-GLE} follows from the definition~\req{V1} of $V_1$.

In order to show that $V_2$ is a strong solution of \req{V2-GLE} 
one has to show that
\be{V2-strongsol}
   V_2(t) \,=\, -\int_0^t \left(
                   DV_2(\tau) \,+\, \int_0^\tau \gamma(\tau-s)V_2(s)\ds
                 \right)\!\dtau \,+\, GW(t)\,.
\ee
Using \req{V2} and \req{DiffResolv} the integrand
on the right-hand side of \req{V2-strongsol} can be rewritten as
\begin{align*}
   &DV_2(\tau) \,+\, \int_0^\tau \gamma(\tau-s)V_2(s)\ds\\[1ex]
   &\quad 
    \,=\, D \int_0^\tau r(\tau-\tau')G\dW(\tau')
          \,+\, \int_0^\tau 
                   \gamma(\tau-s)\int_0^s r(s-\tau')G\dW(\tau')\ds\\[1ex]
   &\quad
    \,=\, D \int_0^\tau r(\tau-\tau')G\dW(\tau')
          \,+\, \int_0^\tau\int_0^{\tau-\tau'} 
                  \gamma(\tau-\tau'-s') r(s')\ds'\, G\dW(\tau')\\[1ex]
   &\quad
    \,=\, \int_0^\tau \left(
             D r(\tau-\tau')\,+\,\int_0^{\tau-\tau'}\gamma(\tau-\tau'-s')r(s')\ds'
          \right)G\dW(\tau') \\[1ex] 
   &\quad
    \,=\, - \int_0^\tau r'(\tau-\tau') G\dW(\tau') \,,
\end{align*}
and hence,
\begin{align*}
   &-\int_0^t\left(
        DV_2(\tau) \,+\, \int_0^\tau \gamma(\tau-s)V_2(s)\ds\right)\dtau\\[1ex]
   &\quad 
    \,=\, \int_0^t \int_0^\tau r'(\tau-\tau')G\dW(\tau') \dtau
    \,=\, \int_0^t \int_{\tau'}^t r'(\tau-\tau') \dtau\, G\dW(\tau')\\[1ex]
   &\quad
    \,=\, \int_0^t r(t-\tau') G\dW(\tau') \,-\, \int_0^t G\dW(\tau') 
    \,=\, V_2(t) \,-\, GW(t)\,.
\end{align*}
This establishes \req{V2-strongsol}, i.e., that $V_2$ is a strong solution
of \req{V2-GLE}. Accordingly, both identities in \req{V12-GLE} hold true,
which was to be shown.

The uniqueness of the solution of \req{GLE} follows from the fact that the
difference of any two solutions is (almost surely) a solution of the
deterministic equation~\req{DiffResolv} with homogeneous initial data, 
and this initial value problem has no nontrivial solutions due
to the uniqueness of the differential resolvent.
\end{proofof}

Next comes the generalized Langevin equation with infinite time horizon.

\begin{proofof}{Theorem~\ref{Thm:GLEinf}}
To begin with, take note that since $r$ is assumed to belong to $L^1(\R^+)$, 
so does $r'$, because \req{DiffResolv} implies that
\bdm
   \norm{r'}_{L^1(\R^+)} \,\leq\, 
   \norm{D}_2\norm{r}_{L^1(\R^+)} \,+\, \norm{\gamma}_{L^1(\R^+)}\norm{r}_{L^1(\R^+)}\,,
\edm
where the $L^1$-norm is defined as
\bdm
   \norm{r}_{L^1(\R^+)} \,=\, \int_0^\infty \norm{r(t)}_2 \dt\,.
\edm
It follows that $r$ is continuous with a well-defined limit
at infinity, and this limiting value must be the zero matrix, for otherwise
$r$ would not belong to $L^1(\R^+)$. Accordingly, $r$ is bounded, and
hence, $r\in L^2(\R^+)$.
It follows that
%Given that $r\in L^1(\R^+)\cap L^2(\R^+)$,
\begin{subequations}
\label{eq:V1V2inf}
\be{V1inf}
   V_{1,\infty}(t) \,=\, \int_{-\infty}^t r(t-\tau)F_0(\tau)\dtau
\ee
and
\be{V2inf}
   V_{2,\infty}(t) \,=\, \int_{-\infty}^t r(t-\tau)G\dW(\tau)
\ee
are well-defined mean-square continuous bounded Gaussian processes.
\end{subequations}

As in the proof of Theorem~\ref{Thm:GLE} one can now show that 
$V_\infty=V_{1,\infty}+V_{2,\infty}$ is a strong solution of \req{GLEinf}. 
The details of the first step of this argument, 
which consists in proving that $V_{1,\infty}$ has a mean-square continuous
derivative with
\be{V1infty-sol}
   V_{1,\infty}'(t) \,=\, -DV_{1,\infty}(t)
      \,-\, \int_{-\infty}^t \gamma(t-s)V_{1,\infty}(s)\ds \,+\, F_0(t)
\ee
can be omitted; the argument is exactly the same as in the proof of 
Theorem~\ref{Thm:GLE}.

To establish that $V_{2,\infty}$ is a strong solution of 
\be{V2infty-strongsol}
   \dot{V}_{2,\infty}(t) \,=\, -DV_{2,\infty}(t) 
      \,-\, \int_{-\infty}^t \gamma(t-s)V_{2,\infty}(s) \ds \,+\, G\dot{W}(t)\,,
\ee
one can follow the proof of Theorem~\ref{Thm:GLE} to show that
\bdm
   DV_{2,\infty}(t) \,+\, \int_{-\infty}^t \gamma(t-s)V_{2,\infty}(s)\ds
   \,=\, -\int_{-\infty}^t r'(t-\tau)G\dW(\tau)\,.
\edm
Accordingly, the integral of the right-hand side of 
\req{V2infty-strongsol} over any given bounded time interval $[t_1,t_2]$ equals
\begin{align*}
   &\int_{t_1}^{t_2}\left(-DV_{2,\infty}(t) 
       \,-\, \int_{-\infty}^t \gamma(t-s)V_{2,\infty}(s) \ds\right)\!\dt
    \,+\, \int_{t_1}^{t_2} G\dW(t)\\[1ex]
   &\quad
    \,=\, \int_{t_1}^{t_2}\int_{-\infty}^t r'(t-\tau)G\dW(\tau)\dt
          \,+\, \int_{t_1}^{t_2} G\dW(t)\\[1ex]
   &\quad
    \,=\, \int_{-\infty}^{t_2} 
             \left(\int_{\max\{\tau,t_1\}}^{t_2}r'(t-\tau)\dt\right) G\dW(\tau)
          \,+\, \int_{t_1}^{t_2} G\dW(t)\\[1ex]
   &\quad
    \,=\, \int_{-\infty}^{t_2} r(t_2-\tau)G\dW(\tau)
          \,-\, \int_{-\infty}^{t_1} r(t_1-\tau)G\dW(\tau)
          \,-\, \int_{t_1}^{t_2} r(0)G\dW(\tau)\\[1ex]
   &\quad\phantom{\,=\ }   \,+\, \int_{t_1}^{t_2} G\dW(t)\\[2ex]
   &\quad
    \,=\, V_{2,\infty}(t_2) \,-\, V_{2,\infty}(t_1)\,, 
\end{align*}
where in the final step $r(0)=I$ has been used. Accordingly, $V_{2,\infty}$ is
a strong solution of \req{V2infty-strongsol}. Together with \req{V1infty-sol}
this shows that $V_\infty$ is a strong solution of \req{GLEinf}.

Concerning the uniqueness issue, the difference of any two bounded
solutions of \req{GLEinf} is (almost surely)
a bounded solution of the deterministic equation
\bdm
   v'(t) \,=\, - Dv(t) \,-\, \int_{-\infty}^t \gamma(t-s)v(s)\ds\,, \qquad
   t\in\R\,.
\edm
By virtue of \cite[Theorem~3.3.10]{GLS90} -- applied to the full line 
convolution, in which the convolution kernel $\gamma$ is extended by zero to
the negative time axis -- $v=0$ is the only bounded solution of this equation,
and hence, this settles uniqueness.

It remains to prove that $V_\infty(t)-V(t)$ converges to zero in the 
mean-square sense as $t\to\infty$. For a given path of the forcing process $F$ 
it follows from \req{pathwise} and \req{Vinfty} that
\begin{align*}
   V_\infty(t) \,-\, V(t) 
   &\,=\, \int_{-\infty}^0 r(t-s) F(s) \ds - r(t)V_0 \\[1ex]
   &\,=\, \int_{-\infty}^0 r(t-s)F_0(s)\ds \,+\, \int_{-\infty}^0 r(t-s)G\dW(s)
          \,-\, r(t)V_0\,.
\end{align*}
Accordingly,
\bdm
   \E\bigl(\norm{V_\infty(t) \,-\, V(t)}_2^2\bigr)
   \,\leq\, C\left(\left(\int_t^\infty \norm{r(\tau)}_2\dtau\right)^2
            \,+\, \int_t^\infty \norm{r(\tau)}_2^2\dtau
            \,+\, \norm{r(t)}_2^2\right)
\edm 
for some constant $C>0$, and the right-hand side goes to zero as $t\to\infty$
because $r\in L^1(\R^+)\cap L^2(\R^+)$, and $r(t)\to 0$ as $t\to\infty$.
\end{proofof}

%%%%%%%%%%%%%%%%%%%%%%%%%%%%%%%%%%%%%%%%%%%%%%%%%%%%%%%%%%%%%%%%%%%%%% 
\section{Proofs of Theorem~\ref{Thm:fdt} and its corollary}
\label{App:B}
%%%%%%%%%%%%%%%%%%%%%%%%%%%%%%%%%%%%%%%%%%%%%%%%%%%%%%%%%%%%%%%%%%%%%%
%\renewcommand{\theequation}{B.\arabic{equation}}
%\renewcommand{\thetheorem}{B.\arabic{theorem}}
%\setcounter{equation}{0}
%\setcounter{theorem}{0}
This appendix contains the proof of the second fluctuation-dissipation theorem
for the generalized Langevin equation formulated as an 
initial value problem.

\begin{proofof}{Theorem~\ref{Thm:fdt}}
Let $V_0\sim \Normal(0,\Sigma)$ and let $F$ be as in Assumption~\ref{Ass:F} 
and independent of $V_0$.
Then a unique solution of \req{GLE} exists according to Theorem~\ref{Thm:GLE},
and it can be written in the form $V=V_1+V_2$ with $V_1$ and $V_2$ of \req{V12}.
From this representation it is obvious that $V$ is a centered Gaussian process.

Furthermore, for $s,t\geq 0$ there holds
\begin{align*}
   &V_2(s)V_1(s+t)^* \\[1ex]
   &\quad \,=\, \int_0^s r(s-\tau)G V_0^*r(s+t)^* \dW(\tau) \\[1ex]
   &\quad\phantom{\,=\ } 
                \,+\, \int_0^s\int_0^{s+t} 
                         r(s-\tau)G F_0(\tau')^*r(s+t-\tau')^*
                      \dtau'\dW(\tau)\\[1ex]
   &\quad \,=\, \int_0^s r(s-\tau)GV_0^*r(s+t)^* \dW(\tau)\\[1ex]
   &\quad\phantom{\,=\ } 
                \,+\, \int_0^{s+t}
                      \left(\int_{-\infty}^\infty\int_0^s
                         r(s-\tau)G 
                      \dW(\tau)\dWtilde(\sigma)^*\varphi(\tau'-\sigma)^*
                      \right)\!
                      r(s+t-\tau')^*
                      \dtau'\,,
\end{align*}
and hence,
\begin{subequations}
\label{eq:EV1V2}
\be{EV1V2a}
   \E\bigl(V_1(s+t)V_2(s)^*\bigr)
   \,=\, \int_0^{s+t} \int_0^s 
            r(s+t-\tau')\varphi(\tau'-\tau)G^*r(s-\tau)^*
         \dtau\dtau'
\ee
when $\Wtilde=W$, while
\be{EV1V2b}
   \E\bigl(V_1(s+t)V_2(s)^*\bigr) \,=\, 0
\ee
\end{subequations}
when $\Wtilde$ and $W$ are independent of each other.
In the same manner one obtains
\begin{align*}
   &V_2(s+t)V_1(s)^* \\[1ex]
   &\quad \,=\, \int_0^{s+t} r(s+t-\tau)GV_0^*r(s)^* \dW(\tau) \\[1ex]
   &\quad\phantom{\,=\ } 
                \,+\, \int_0^{s+t}\int_0^s
                         r(s+t-\tau)G F_0(\tau')^*r(s-\tau')^*
                      \dtau'\dW(\tau)\\[1ex]
   &\quad \,=\, \int_0^{s+t} r(s+t-\tau)GV_0^*r(s)^* \dW(\tau) \\[1ex]
   &\quad\phantom{\,=\ } 
                \,+\, \int_0^s\left(\int_{-\infty}^\infty\int_0^{s+t}
                         r(s+t-\tau)G \dW(\tau)\dWtilde(\sigma)^* 
                         \varphi(\tau'-\sigma)^*\right)r(s-\tau')^*\dtau'\,,
\end{align*}
which yields
\begin{subequations}
\label{eq:EV2V1}
\be{EV2V1a}
   \E\bigl(V_2(s+t)V_1(s)^*\bigr)
   \,=\, \int_0^{s+t} \int_0^s
            r(s+t-\tau)G\,\varphi(\tau'-\tau)^*r(s-\tau')^*\dtau'\dtau
\ee
when $\Wtilde=W$, whereas
\be{EV2V1b}
   \E\bigl(V_2(s+t)V_1(s)^*\bigr) \,=\, 0
\ee
\end{subequations}
when $\Wtilde$ and $W$ are independent of each other.
Further, it is readily seen that
\be{EV2V2}
   \E\bigl(V_2(s+t)V_2(s)^*\bigr)
   \,=\, \int_0^s r(s+t-\tau)GG^*r(s-\tau)^*\dtau\,,
\ee
and
\be{EV1V1}
\begin{aligned}
   &\E\bigl(V_1(s+t)V_1(s)^*\bigr) \\
   &\quad \,=\, r(s+t)\Sigma r(s)^*
          \,+\, \int_0^{s+t} \int_0^s 
                   r(s+t-\tau) C_{F_0}(\tau-\tau') r(s-\tau')^*\dtau'\dtau\,.
\end{aligned}
\ee

It follows from \req{EV1V2}--\req{EV1V1} that
\begin{align*}
   \E\bigl(V(s+t)V(s)^*\bigr)
   &\,=\, r(s+t)\Sigma r(s)^* \,+\, \int_0^s r(s+t-\tau)GG^*r(s-\tau)^*\dtau
          \\[1ex]
   &\phantom{\,=\ }
          \,+\, \kappa\int_0^{s+t}\int_0^s 
                   r(s+t-\tau')\varphi(\tau'-\tau)G^* r(s-\tau)^*
                \dtau\dtau'\\[1ex]
   &\phantom{\,=\ }
          \,+\, \kappa\int_0^{s+t}\int_0^s 
                   r(s+t-\tau)G\,\varphi(\tau'-\tau)^* r(s-\tau')^*
                \dtau'\dtau\\[1ex]
   &\phantom{\,=\ }
          \,+\, \int_0^{s+t}\int_0^s 
                   r(s+t-\tau)C_{F_0}(\tau-\tau') r(s-\tau')^*
                \dtau'\dtau\,,
\end{align*}
where $\kappa=1$, if $\Wtilde=W$, whereas $\kappa=0$, 
if $\Wtilde$ and $W$ are two independent Brownian motions.
Together with \req{CF0} this implies that
%\be{CV-tmp}
%\begin{aligned}
%   \E\bigl(V(s+t)V(s)^*\bigr)
%   &\,=\, r(s+t)r(s)^* \,+\, \int_0^s r(s+t-\tau)(D+D^*)r(s-\tau)^*\dtau\\[1ex]
%   &\phantom{\,=\ }
%          \,+\, \int_0^{s+t}\int_0^s 
%                   r(s+t-\tau)\gamma(\tau-\tau') r(s-\tau')^*
%                \dtau'\dtau\,.
%\end{aligned}
%\ee
\be{CV-tmp}
\begin{aligned}
   \E\bigl(V(s+t)V(s)^*\bigr)
   &\,=\, r(s+t)\Sigma r(s)^* \,+\, \int_0^s r(s+t-\tau)GG^*r(s-\tau)^*\dtau
          \\[1ex]
   &\phantom{\,=\ }
          \,+\, \int_0^{s+t}\int_0^s 
                   r(s+t-\tau)\chi(\tau-\tau') r(s-\tau')^*
                \dtau'\dtau\,,
\end{aligned}
\ee
where
\be{chi}
   \chi \,=\, \begin{cases}
                 \varphi G^* \,+\, G\phitrev^* \,+\, \varphi*\phitrev^* \,, &
                 \text{if \ $\Wtilde = W$}\,, \\
                 \varphi*\phitrev^*\,, & 
                 \text{if \ $W$ and $\Wtilde$ are independent}\,.
              \end{cases}
\ee

Consider now the double integral
\be{double}
   \dI \,=\, \int_0^{s+t}\int_0^s 
                r(s+t-\tau)\gamma_\Sigma(\tau-\tau') r(s-\tau')^*
             \dtau'\dtau\,,
\ee
where $\gamma_\Sigma$ is defined as in \req{gamma-Sigma}.
For $0\leq\tau'\leq s$ and $\tau'\leq\tau\leq s+t$ the argument $\tau-\tau'$
of $\gamma_\Sigma$ in \req{double} is nonnegative. 
The corresponding part ${\cal I}_+$ of the double integral~\req{double} 
can thus be rewritten by Fubini's theorem as
\bdm
   {\cal I}_+ \,=\, 
   \int_0^s \left(
               \int_0^{s+t-\tau'} r(s+t-\tau'-\tau'') \gamma(\tau'')\dtau''
           \right) 
           \Sigma\,r(s-\tau')^*\dtau'\,.
\edm
Replacing the inner integral by means of \req{DiffResolv2} one therefore 
arrives at
\be{Iplus}
   {\cal I}_+
   \,=\, -\int_0^s r(s+t-\tau)D\Sigma\,r(s-\tau)^*\dtau
         \,-\, \int_0^s r'(s+t-\tau)\Sigma\,r(s-\tau)^*\dtau\,.
\ee
In the remaining part of the domain of integration of \req{double}, 
where $0\leq\tau'\leq s$ and $0\leq\tau<\tau'$, 
the argument $\tau-\tau'$ of $\gamma_\Sigma$ is negative. 
Therefore, using \req{gamma-Sigma} and Fubini's theorem,
the corresponding part ${\cal I}_-$ of $\dI$ becomes
\begin{align*}
   {\cal I}_-
   &\,=\, \int_0^s \int_0^{\tau'}
             r(s+t-\tau)\Sigma\gamma(\tau'-\tau)^*r(s-\tau')^*
          \dtau\dtau'\\[1ex]
   &\,=\, \int_0^s r(s+t-\tau)\Sigma \left(
             \int_0^{s-\tau} r(s-\tau-\tau'')\gamma(\tau'')
             \dtau''\right)^{\!*}\!
          \dtau\,,
\end{align*}
and together with \req{DiffResolv2} it follows that
\be{Iminus}
   {\cal I}_-
   \,=\, -\int_0^s r(s+t-\tau)\Sigma D^*r(s-\tau)^*\dtau
         \,-\, \int_0^s r(s+t-\tau)\Sigma\,r'(s-\tau)^*\dtau\,.
\ee
Using \req{Iplus} and \req{Iminus}, \req{double} thus yields
\begin{align*}
   \dI &\,=\, {\cal I}_+ + {\cal I}_-\\[1ex]
   &\,=\, - \int_0^s r(s+t-\tau)(D\Sigma+\Sigma D^*)r(s-\tau)^*\dtau \\[1ex]
   &\phantom{\,=\ } \,-\, \int_0^s r'(s+t-\tau)\Sigma\,r(s-\tau)^*\dtau
          \,-\, \int_0^s r(s+t-\tau)\Sigma\,r'(s-\tau)^*\dtau\\[1ex]
   &\,=\, - \int_0^s r(s+t-\tau)(D\Sigma+\Sigma D^*)r(s-\tau)^*\dtau \\[1ex]
   &\phantom{\,=\ }
          \,+\, \int_0^s \!\!
                   \frac{\rmd}{\dtau} \Bigl(r(s+t-\tau)\Sigma\,r(s-\tau)^*\Bigr)
                \!\dtau\\[1ex]
   &\,=\, - \int_0^s r(s+t-\tau)(D\Sigma+\Sigma D^*)r(s-\tau)^*\dtau 
          \,+\, r(t)\Sigma \,-\, r(s+t)\Sigma\,r(s)^*\,;
\end{align*}
recall that $r(0)=I$.

Adding this result to \req{CV-tmp} gives
\begin{align*}
   &\E\bigl(V(s+t)V(s)^*\bigr)\\[1ex]
   &\quad
    \,=\, r(t)\Sigma \,-\, \dI  
          \,+\, \int_0^s r(s+t-\tau)(GG^*-D\Sigma-\Sigma D^*)r(s-\tau)^*\dtau
          \\[1ex]
   &\quad\phantom{\,=\,}
          \,+\, \int_0^{s+t}\int_0^s 
                   r(s+t-\tau)\chi(\tau-\tau') r(s-\tau')^*
                \dtau'\dtau\,,
\end{align*}
and inserting \req{double} for $\dI$ this can be rewritten as
\be{preliminary}
\begin{aligned}
   &\E\bigl(V(s+t)V(s)^*\bigr)\\[1ex]
   &\quad\,=\, r(t)\Sigma 
          \,+\, \int_0^s r(s+t-\tau)(GG^*-D\Sigma-\Sigma D^*)r(s-\tau)^*\dtau
          \\[1ex]
   &\quad\phantom{\,=\,}
          \,+\, 
          \int_0^{s+t}\int_0^s 
             r(s+t-\tau)(\chi-\gamma_\Sigma)(\tau-\tau')
             r(s-\tau')^*
          \dtau'\dtau\,,
\end{aligned}
\ee
valid for $s,t\geq 0$.
Take note that
\bdm
   \chi(-t) \,=\, \chi(t)^* \quad \text{and} \quad
   \gamma_\Sigma(-t) \,=\, \gamma_\Sigma(t)^*\,, \qquad t\in\R\,.
\edm
Therefore, $\chi=\gamma_\Sigma$ a.e.\ on $\R$, once it holds true almost
everywhere on the positive time axis, i.e., 
if the assumption~\req{fdt-relation} is valid.
Accordingly, if \req{GG} and \req{fdt-relation} are satisfied,
then $V$ is a stationary solution of the 
generalized Langevin equation~\req{GLE} by virtue of \req{preliminary}, 
and its autocorrelation function coincides for $t\geq 0$ 
with the associated differential resolvent $r$ times $\Sigma$.
This establishes the if-part of the theorem. 

For the only-if part, assume that the solution $V$ of \req{GLE} is
a stationary process. Then it follows from \req{preliminary}
and Fubini's theorem that the autocorrelation function $C_V$ of $V$ 
can be rewritten in the form  
\be{CVconverse-tmp2}
\begin{aligned}
   C_V(t) 
   &\,=\, \E\bigl(V(s+t)V(s)^*\bigr)\\[1ex]
   &\,=\, r(t)\Sigma 
          \,+\, \int_0^s r(s+t-\tau)(GG^*-D\Sigma-\Sigma D^*)r(s-\tau)^*\dtau
          \\[1ex]
   &\phantom{\,=\ }\phantom{r(t)\Sigma}\      
          \,+\, \int_0^s \Delta(s+t,\tau') r(s-\tau')^*\dtau'  
\end{aligned}
\ee
for every $s,t\geq 0$, with
\be{Delta}
   \Delta(\tau,\tau') \,=\, 
   \int_0^\tau 
      r(\tau-\tau'')(\chi-\gamma_\Sigma)(\tau''-\tau')\dtau''\,, \qquad
   0 \leq \tau' \leq \tau\,.
\ee
In particular, for $s=0$ and every $t\geq 0$ this shows that
\bdm
   C_V(t) \,=\, \E\bigl(V(t)V(0)^*\bigr) \,=\, r(t)\Sigma\,,
\edm
i.e., that \req{CV} holds true, as well as
\be{zero}
   0 \,=\, \!\left.\frac{\!\rmd}{\!\ds}\,\E\bigl(V(s+t)V(s)^*\bigr)\right|_{s=0}
   \!=\, r(t)\bigl(GG^*-D\Sigma-\Sigma D^*)\,+\, \Delta(t,0)\,,
\ee
where, again, $r(0)=I$ has been used in the final step.
Note that $\Delta(0,0)=0$, and hence, 
\bdm
   GG^* \,=\, D\Sigma+\Sigma D^*
\edm
follows from \req{zero} for $t=0$. Accordingly, \req{zero} further implies that
\be{Delta-help}
   0 \,=\, \Delta(t,0) 
     \,=\, \int_0^t r(t-\tau)(\chi-\gamma_\Sigma)(\tau)\dtau 
   \qquad \text{for every $t\geq 0$.}
\ee

Since $r\Sigma$ coincides with the autocorrelation function of $V$ for
nonnegative arguments, the function $r$ is bounded, and therefore
its Laplace transform 
\bdm
   \L r(\zeta) \,=\, \int_0^\infty e^{-\zeta t} r(t)\dt
\edm
is well-defined for $\Real\zeta>0$, and \req{Delta-help} implies that
\be{former}
   0 \,=\, \L r(\zeta)\,\L(\chi-\gamma_\Sigma)(\zeta)\,, \qquad
   \Real\zeta > 0\,.
\ee
On the other hand, the Laplace transform of the 
integro-differential equation~\req{DiffResolv} yields
\bdm
   \zeta\L r(\zeta) \,-\, I 
   \,=\, -D\L r(\zeta) \,-\, \L\gamma(\zeta)\,\L r(\zeta)
\edm
for $\Real\zeta > 0$, showing that
\be{Lr}
   \bigl(\zeta I \,+\, D \,+\, \L\gamma(\zeta)\bigr)\L r(\zeta) \,=\, I\,.
\ee
From the latter identity follows that the null space of $\L r(\zeta)$ 
is trivial for every $\zeta$ with $\Real\zeta > 0$, 
and therefore \req{former} implies that the Laplace transform of 
$\chi-\gamma_\Sigma$ vanishes identically in the open right half plane.
Since the Laplace transform is injective
(cf., e.g., Doetsch~\cite[Section~5]{Doet74}), this shows that
%\bdm
%   \widehat{\chi-\gamma_\Sigma}(\omega)
%   \,=\, \L(\chi-\gamma_\Sigma)(2\pi\rmi\omega)
%         \,+\, \bigl(\L(\chi-\gamma_\Sigma)(-2\pi\rmi\omega)\bigr)^* 
%   \,=\, 0\,,
%\edm
%and since $\chi-\gamma_\Sigma$ is a linear combination of functions
%from $L^1(\R)$ and $L^2(\R)$, it itself must vanish almost everywhere.
%This yields
\bdm
   \chi \,=\, \gamma_\Sigma \,=\, \gamma\Sigma \qquad \text{a.e. on $\R^+$}.
\edm
%by virtue of \req{gamma-Sigma}. 
Thus, we have established that \req{fdt-relation} also holds true, 
and the proof is done.
\end{proofof}

\begin{proofof}{Corollary~\ref{Cor:fdt-Kubo}}
In order to apply Theorem~\ref{Thm:fdt} to the 
Langevin equation~\req{GLE-Kubo}, let $D=0$ in \req{GLE}. Further, let
the fluctuating force satisfy Assumption~\ref{Ass:F} with $G=0$. 
Then condition~\req{GG} of Theorem~\ref{Thm:fdt} is satisfied. 
Since $F_0$ in Assumption~\ref{Ass:F} has the representation~\req{F0phi}
for some $\varphi\in L^2(\R)$, 
%by virtue of \req{CF0}, 
condition~\req{fdt-relation} is equivalent to \req{Cor:fdt-Kubo}
by virtue of \req{CF0}; in this case
the assertion is therefore an immediate consequence of
Theorem~\ref{Thm:fdt}.
%\end{proof}

%\begin{remark}
%\label{Rem:fdt-Kubo}
%\rm
%Note that Kubo's fluctuation-dissipation relation~\req{F-Kubo} requires
%that the extension of the memory kernel to all $t\in\R$, given by
%\be{gamma-ext}
%   \gamma(-t) \,=\, \gamma(t)^*\,, \qquad t>0\,,
%\ee
%is a function of positive type (also called positive definite
%function in the literature), because 
%% By virtue of Bochner's theorem
%%this implies that 
%the Fourier transform of $C_F$
%%$\widehat{C}_F(\omega)$ of $C_F$
%%\gammahat(\omega)$ of this extension
%is a positive semidefinite matrix for every frequency $\omega\in\R$.

In order to have a representation~\req{F0phi} for some $\varphi\in L^2(\R)$,
however, it is necessary, that $\gamma_\Sigma$ of \req{gamma-Sigma} belongs
to $L^1(\R)$;
%Due to the specific way the driving force is specified in 
%Assumption~\ref{Ass:F}, \req{F-Kubo} further implies that 
%the Fourier transform of the extension~\req{gamma-ext} belongs to $L^1(\R)$;
compare \req{CF0hat}. 
But a careful inspection of the proof
of Theorem~\ref{Thm:fdt} reveals that the assertion of the theorem remains
valid for $D=G=0$, when the fluctuating force $F=F_0$ is \emph{any} 
mean-square continuous stationary Gaussian process.
% the other hand, if the extension~\req{gamma-ext}
%of $\gamma$ is only known to be a continuous function of positive type,
%then this property of the memory kernel is 
%(necessary and) sufficient to construct a mean-square continuous stationary 
%process $F_0$ which satisfies \req{F-Kubo}. 
Although such a process cannot always be written in the form~\req{F0phi} 
for some $\varphi\in L^2(\R)$, the autocorrelation function of $V$ is still
given by \req{preliminary} with $\chi=C_{F_0}$, 
and since the latter is a bounded function,
its Laplace transform is well-defined for $\Real\zeta>0$.
This is all what is needed to make the proof go through.
%a careful inspection shows that the proof of Theorem~\ref{Thm:fdt} remains
%valid for this choice of fluctuating force $F_0$, when $D=G=0$;
%take note that $\chi=C_{F_0}$ in this case, which is a 
% the first part of this proof remains valid for this particular
%fluctuating force $F=F_0$. 
%Accordingly, Corollary~\ref{Cor:fdt-Kubo} is also valid, whenever $F$
%in \req{GLE-Kubo} is a mean-square continuous stationary Gaussian process 
%independent of $V_0$.
%is the stationary process obtained in Corollary~\ref{Cor:fdt-Kubo}.
%\fin
%\end{remark}
\end{proofof}

%%%%%%%%%%%%%%%%%%%%%%%%%%%%%%%%%%%%%%%%%%%%%%%%%%%%%%%%%%%%%%%%%%%%%% 
\section{Proofs of Theorem~\ref{Thm:fdtinf}, 
Corollary~\ref{Cor:fdtinf-D0}, and Proposition~\ref{Prop:CFV}}
\label{App:C}
%%%%%%%%%%%%%%%%%%%%%%%%%%%%%%%%%%%%%%%%%%%%%%%%%%%%%%%%%%%%%%%%%%%%%%
%\renewcommand{\theequation}{C.\arabic{equation}}
%\renewcommand{\thetheorem}{C.\arabic{theorem}}
%\setcounter{equation}{0}
%\setcounter{theorem}{0}
This appendix contains the proofs of the results on stationary solutions
of the generalized Langevin equation~\req{GLEinf} with infinite time horizon.

\begin{proofof}{Theorem~\ref{Thm:fdtinf}}
Under the assumptions formulated in Theorem~\ref{Thm:fdtinf}
the generalized Langevin equation~\req{GLEinf}
with infinite time horizon has a unique bounded solution $V_\infty$
by virtue of Theorem~\ref{Thm:GLEinf},
which can be decomposed as $V_\infty=V_{1,\infty}+V_{2,\infty}$
with $V_{1,\infty},V_{2,\infty}$ of \req{V1V2inf}.
Let $s\in\R$ and $t\geq 0$: 
Following the proof of Theorem~\ref{Thm:fdt} in \ref{App:B} 
one easily arrives at
\begin{align*}
   \E\bigl(V_\infty(s+t)V_\infty(s)^*\bigr)
   &\,=\, \int_{-\infty}^s r(s+t-\tau)GG^*r(s-\tau)^*\dtau\\[1ex]
   &\phantom{\,=\ }
          \,+\, \int_{-\infty}^{s+t}\int_{-\infty}^s 
                   r(s+t-\tau)\chi(\tau-\tau') r(s-\tau')^*
                \dtau'\dtau
\end{align*}
in place of \req{CV-tmp}, where $\chi$ is defined in \req{chiinf}.
Substituting $\tau$ by $\tau+s$ and $\tau'$ by $\tau'+s$ it follows that
\begin{align*}
   \E\bigl(V_\infty(s+t)V_\infty(s)^*\bigr)
   &\,=\, \int_{-\infty}^0 r(t-\tau)GG^*r(-\tau)^*\dtau\\[1ex]
   &\phantom{\,=\ }
          \,+\, \int_{-\infty}^t\int_{-\infty}^0 
                   r(t-\tau)\chi(\tau-\tau') r(-\tau')^*
                \dtau'\dtau\,,
\end{align*}
which is independent of $s$, and agrees with \req{Thm:fdtinf}.
This shows that $V_\infty$ is a stationary process.

For $\Sigma\in\R^{d\times d}$ symmetric positive definite and $\gamma_\Sigma$
of \req{gamma-Sigma} one can proceed as in the proof of Theorem~\ref{Thm:fdt},
compare~\req{double}, and establish that
\begin{align*}
   &\int_{-\infty}^t \int_{-\infty}^0 
       r(t-\tau)\gamma_\Sigma(\tau-\tau')r(-\tau')^*\dtau'\dtau\\[1ex]
   &\qquad
    \,=\, r(t)\Sigma \,-\, 
          \int_0^\infty r(t+\tau)(D\Sigma+\Sigma D^*)r(\tau)^*\dtau\,, \qquad
    t \geq 0\,.
\end{align*}
Adding this equation to \req{Thm:fdtinf} the alternative representation
\req{Thm:fdtinf2} of $C_{V_\infty}$ is readily obtained.
\end{proofof}

\begin{proofof}{Corollary~\ref{Cor:fdtinf-D0}}
According to \req{Lr} and Remark~\ref{Rem:PW} 
the Fourier transform $\rhat_0$ of 
\bdm
   r_0(t) \,=\, \begin{cases}
                   r(t) \,, & t\geq 0\,, \\
                   0\,, & t<0\,,
                \end{cases}
\edm
can be expressed in terms of the Laplace transform of the memory kernel, namely
\bdm
   \rhat_0(\omega) \,=\, \L r(\rmi\omega)
   \,=\, \bigl(\rmi\omega I + \L\gamma(\rmi\omega)\bigr)^{-1}\,, 
   \qquad \omega\in\R\,.
\edm
%by virtue of \req{Lr} and Remark~\ref{Rem:PW}. 

The Fourier transform $\psihat$ of $\psi$, on the other hand, satisfies
\bdm
   \psihat(\omega) \,=\, 
   \frac{\rmi}{\omega^3}\,\widehat{\psi'''}(\omega)\,, \qquad 
   \omega\in\R\,,
\edm
and hence, since $\psihat(\omega)$ is positive semidefinite, 
the function $\phihat$ defined by
\bdm
   \phihat(\omega) \,=\, \rhat_0(\omega)^{-1}\psihat(\omega)^{1/2}
   \,=\, \rmi
         \bigl(I\,-\, \frac{\rmi}{\omega}\L\gamma(\rmi\omega)\bigr)
         \bigl(\frac{\rmi}{\omega}\widehat{\psi'''}(\omega)\bigr)^{1/2}
\edm
is a well-defined element of $L^2(\R)$,
because $\widehat{\psi'''}(\omega)/\omega\in L^1(\R)$ by virtue of the 
Cauchy-Schwarz inequality.

Choosing the fluctuating force of \req{GLEinf} according to 
Assumption~\ref{Ass:F} with $G=0$ and the inverse Fourier transform 
$\varphi$ of $\phihat$, the autocorrelation function of the 
stationary solution $V_\infty$ of \req{GLEinf} is given by 
\bdm
   C_{V_\infty}(t)   
   \,=\, \int_0^\infty\int_0^\infty 
            r(\tau)(\varphi*\phitrev^*)(t+\tau'-\tau)r(\tau')^*\dtau'\dtau
\edm
according to Theorem~\ref{Thm:fdtinf}. The convolution theorem therefore yields
\bdm
   \widehat{C}_{V_\infty} \,=\, \rhat_0 \phihat \phihat^*\rhat_0^* \,=\, \psihat\,,
\edm
proving the assertion that $C_{V_\infty}=\psi$.
\end{proofof}

\begin{proofof}{Proposition~\ref{Prop:CFV}}
Starting from the representation $V_\infty=V_{1,\infty}+V_{2,\infty}$ of 
\req{V1V2inf} there holds
\be{CV1F0}
   \E\bigl(V_{1,\infty}(s+t)F_0(s)^*\bigr)
   \,=\, \int_{-\infty}^{s+t} r(s+t-\tau)C_{F_0}(\tau-s)\dtau\,,
\ee
where $C_{F_0}$ has been computed in \req{CF0}.
If $\Wtilde=W$, then
\be{CV2F0}
   \E\bigl(V_{2,\infty}(s+t)F_0(s)^*\bigr)
   \,=\, \int_{-\infty}^{s+t}r(s+t-\tau)G \varphi(s-\tau)^*\dtau\,,
\ee
while $V_{2,\infty}$ and $F_0$ are uncorrelated, when $\Wtilde$ and $W$
are independent Brownian motions.

Since the component $G\dot{W}$ of the fluctuating force is a 
random distribution, rather than a random process, 
one has to use the distributional definition from Remark~\ref{Rem:distribution}
to evaluate the covariance structure of $V_\infty$ and $G\dot{W}$. 
Of course, if $\Wtilde$ and $W$ are independent
of each other, then $V_{1,\infty}$ and $G\dot{W}$ are uncorrelated.
If $\Wtilde=W$, on the other hand, and $f,g\in C_0^\infty(\R)$ are arbitrary
vector-valued test functions, then
\bdmal
   &\scalp{V_{1,\infty},f}\scalp{G\dot{W},g}
    \,=\, \int_{-\infty}^\infty f(t)^*\int_{-\infty}^t r(t-\tau)F_0(\tau)\dtau\dt
          \int_{-\infty}^\infty g(s)^*G\dW(s) \\[1ex]
   &\,=\, \int_{-\infty}^\infty\int_{-\infty}^\infty\int_{-\infty}^t 
             f(t)^*r(t-\tau)\varphi(\tau-\tau')\dtau\dt\dW(\tau')
          \int_{-\infty}^\infty g(s)^*G\dW(s) \,,
\edmal
and hence,
\bdm
   \E\bigl(\scalp{V_{1,\infty},f}\scalp{G\dot{W},g}\bigr)
   \,=\, \int_{-\infty}^\infty\int_{-\infty}^\infty f(t)^*
            \int_{-\infty}^t r(t-\tau) \varphi(\tau-s)G^*g(s)\dtau\dt\ds\,,
\edm
which means that the corresponding correlation distribution can be identified
with the function
\be{CV1GW}
   \E\bigl(V_{1,\infty}(s+t)\bigl(G\dot{W}(s)\bigr)^*\bigr)
   \,=\, \int_{-\infty}^{s+t} r(s+t-\tau) \varphi(\tau-s)G^*\dtau\,,
\ee
when $\Wtilde=W$. Adding up the identities~\req{CV1F0}-\req{CV1GW}
it follows that
\begin{align}
\nonumber
   \E\bigl(V(s+t)F(s)^*\bigr)
   &\,=\, \int_{-\infty}^{s+t} r(s+t-\tau)\chi(\tau-s)\dtau  
          \,+\, \E\bigl(V_{2,\infty}(s+t)\bigl(G\dot{W}(s)\bigr)^*\bigr)\\[1ex]
\label{eq:Prop:CFV-tmp}
   &\,=\, \int_0^\infty r(\tau) \chi(t-\tau)\dtau 
          \,+\, \E\bigl(V_{2,\infty}(s+t)\bigl(G\dot{W}(s)\bigr)^*\bigr)
\end{align}
with $\chi$ of \req{chiinf}.

It remains to compute the correlation structure between $V_{2,\infty}$ and
$G\dot{W}$. Again, let $f,g\in C_0^\infty(\R)$ be given test functions.
Then Fubini's theorem gives
\bdm
   \scalp{V_{2,\infty},f}\scalp{G\dot{W},g}
   \,=\, \int_{-\infty}^\infty f(t)^*\int_{-\infty}^t r(t-\tau)G\dW(\tau)
          \int_{-\infty}^\infty g(s)^*G\dW(s)\dt \,,
%   &\,=\, \int_{-\infty}^\infty\int_\tau^\infty f(t)^*r(t-\tau)G\dt\dW(\tau)
%          \int_{-\infty}^\infty g(s)^*G\dW(s) \,,
\edm
so that
\bdm
   \E\bigl(\scalp{V_{2,\infty},f}\scalp{G\dot{W},g}\bigr)
   \,=\, \int_{-\infty}^\infty f(t)^*\int_{-\infty}^t r(t-s)GG^*g(s)\ds\dt\,,
%   &\,=\, \int_{-\infty}^\infty\int_s^\infty f(t)^*r(t-s) GG^*g(s)\dt\ds\,,
\edm
showing that 
\bdm
   \E\bigl(V_{2,\infty}(s+t)\bigl(G\dot{W}(s)\bigr)^*\bigr)
   \,=\, \begin{cases}
            r(t)GG^*\,, & t>0\,,\\
            0\,, & t<0\,.
         \end{cases}
\edm
Inserting this result into \req{Prop:CFV-tmp} finally yields the assertion.
\end{proofof}

%%%%%%%%%%%%%%%%%%%%%%%%%%%%%%%%%%%%%%%%%%%%%%%%%%%%%%%%%%%%%%%%%%%%%%
\section*{Acknowledgement}
The author is grateful to Friederike Schmid, Niklas Bockius, and Max Braun
for their careful reading of a preliminary version of this paper
and for various suggestions which helped to improve the presentation.
%%%%%%%%%%%%%%%%%%%%%%%%%%%%%%%%%%%%%%%%%%%%%%%%%%%%%%%%%%%%%%%%%%%%%%

%%%%%%%%%%%%%%%%%%%%%%%%%%%%%%%%%%%%%%%%%%%%%%%%%%%%%%%%%%%

\end{document}